\theoremstyle{definition}
\newtheorem{lemma}{Lemma}
\newtheorem{proposition}{Proposition}[section]
\newtheorem{property}{Property}[section]
\newcommand{\norm}[1]{\left\lVert #1 \right\rVert}
\newcommand{\modtwo}{~(\mathrm{mod}~2)}
\definecolor{purple}{rgb}{0.5,0,0.5}
\newcommand{\dist}[1]{\operatorname{dist}(#1)}
\newcommand{\vx}{\mathbf{x}}
\newcommand{\vy}{\mathbf{y}}
\newcommand{\vz}{\mathbf{z}}
\newcommand{\vi}{\mathbf{i}}
\newcommand{\vj}{\mathbf{j}}
\newcommand{\vA}{\mathbf{A}}
\renewcommand{\vr}{\mathbf{r}}
\newcommand{\ztwo}{\mathbb{Z}_2}
\newcommand{\bQ}{\overline{Q}}
\newcommand{\bS}{\overline{S}}
\newcommand\defeq{\mathrel{\stackrel{\makebox[0pt]{\mbox{\normalfont\tiny def}}}{=}}}
\newcommand{\tr}{\operatorname{Tr}}
\newcommand{\trv}{\tr ^v}
\newcommand{\refeq}[1]{Eq.~\eqref{#1}}
\newcommand{\Ker}{\operatorname{Ker}}
\newcommand{\Ind}{\operatorname{Ind}}
\begin{document}
\newcommand*{\PITT}{Department of Physics and Astronomy, University of Pittsburgh, Pittsburgh, Pennsylvania 15260, United States}
\newcommand*{\PQI}{Pittsburgh Quantum Institute, Pittsburgh, Pennsylvania 15260, United States}
\title{Local formula for the \texorpdfstring{$\ztwo$}{Z2} invariant of topological insulators}

\author{Zhi Li}\affiliation{\PITT}\affiliation{\PQI}
\author{Roger S. K. Mong}\affiliation{\PITT}\affiliation{\PQI}

\begin{abstract}
	We proposed a formula for the $\mathbb{Z}_2$ invariant for topological insulators, which remains valid without translational invariance. Our formula is a local expression, in the sense that the contributions mainly come from quantities near a point. Using almost commute matrices, we proposed a method to approximate this invariant with local information. The validity of the formula and the approximation method is proved.
\end{abstract}
\maketitle

\section{Introduction}
One of the most important progresses of condensed matter physics in recent years is the realization of many topological phases of matter beyond the Landau-Ginzburg paradigm. While the general classification of topological phases is still in progress, the classification for gapped non-interacting fermions is well-established \cite{AZ,Shinsei,KitaevK,twistK} and shows beautiful connections to $K$-theory and symmetric spaces. According to the action of several discrete symmetries, systems are classified into 10 classes. In each class, systems are labeled by a topological invariant valued in $\mathbb{Z}$ or $\ztwo$. The pattern appearing for various dimensions can be naturally explained by the Bott periodicity~\cite{Bott1,Bott2} and can be arranged into a periodic table.

A topological insulator, first proposed by Kane and Mele in Ref.~\onlinecite{Kane-Mele}, is a nontrivial system in two-dimension (2D) with time-reversal symmetry which squared to $-1$ (AII class in the Altland-Zirnbauer classification~\cite{AZ}). It is characterized by the gapless helical edge modes protected by the time-reversal symmetry \cite{Kane-Mele-edge}, and the band-crossing in the language of topological band theory. The topological invariant in this case is a $\ztwo$ number which we call Kane-Mele invariant. 

For systems with translational invariance, one can get analytical formulas for the topological invariants by working in momentum space and  considering essentially some vector bundles (with symmetries) \cite{Shinsei,KitaevK,twistK} over the Brillouin zone. For example, see Refs.~\onlinecite{Kane-Mele,Fu-Kane,Fu-Kane-inversion,Moore,Qi-Zhang,Roy} for various formulas for the $\ztwo$ Kane-Mele invariant.

While the classification is believed to be robust against disorder \cite{Cstar,noncommu,KitaevK}, analytical formulas are more difficult to find. Nevertheless, one can still get useful results from noncommutative geometry/topology considerations~\cite{Connes,Prodan_2010,Prodan_2011,Prodan_2013}, which may manifest itself as a (Fredholm, mod 2 Fredholm, Bott, etc) index~\cite{Zindex,Loring1,Loring2,Loring3,Loring4,SB2015,index1,Akagi_kernel,PhysRevB.84.241106,HuangLiu} (although some of them are abstract definitions and do not tell us how to calculate them efficiently); or from physical considerations such as scattering theory~\cite{Akhmerov}. A nice example is the following formula~\cite{KitaevAnyon,Mitchell2018} for two-dimensional Chern insulators (class A):
\begin{equation}\label{eq-Kitaev}
    \nu(P)
    =12\pi i\sum_{j\in \mathrm{A}}\sum_{k\in \mathrm{B}}\sum_{l\in \mathrm{C}}
    (P_{jk}P_{kl}P_{lj}-P_{jl}P_{lk}P_{kj}),
\end{equation}
where $P$ is the orthogonal projection operator onto filled bands, or equivalently the ground-state correlation matrix. \{A, B, C\} is a partition of the plane into three parts, as in Fig.~\ref{pic-3regions}. This formula reveals the local nature of the Chern number: assuming $P_{ij}$ decays fast enough as $|i-j|\to\infty$, then $v(P)$ can be well approximated by only summing over $j,k,l$ near the intersection point. For example, truncate the plane with a circle as in Fig.~\ref{pic-3regiontrunc}, then the same summation (with A, B, and C now finite) provides a good estimation. 
 \begin{figure}[htp]
	\centering
	\subfigure[]{
		\includegraphics[width=0.3\columnwidth]{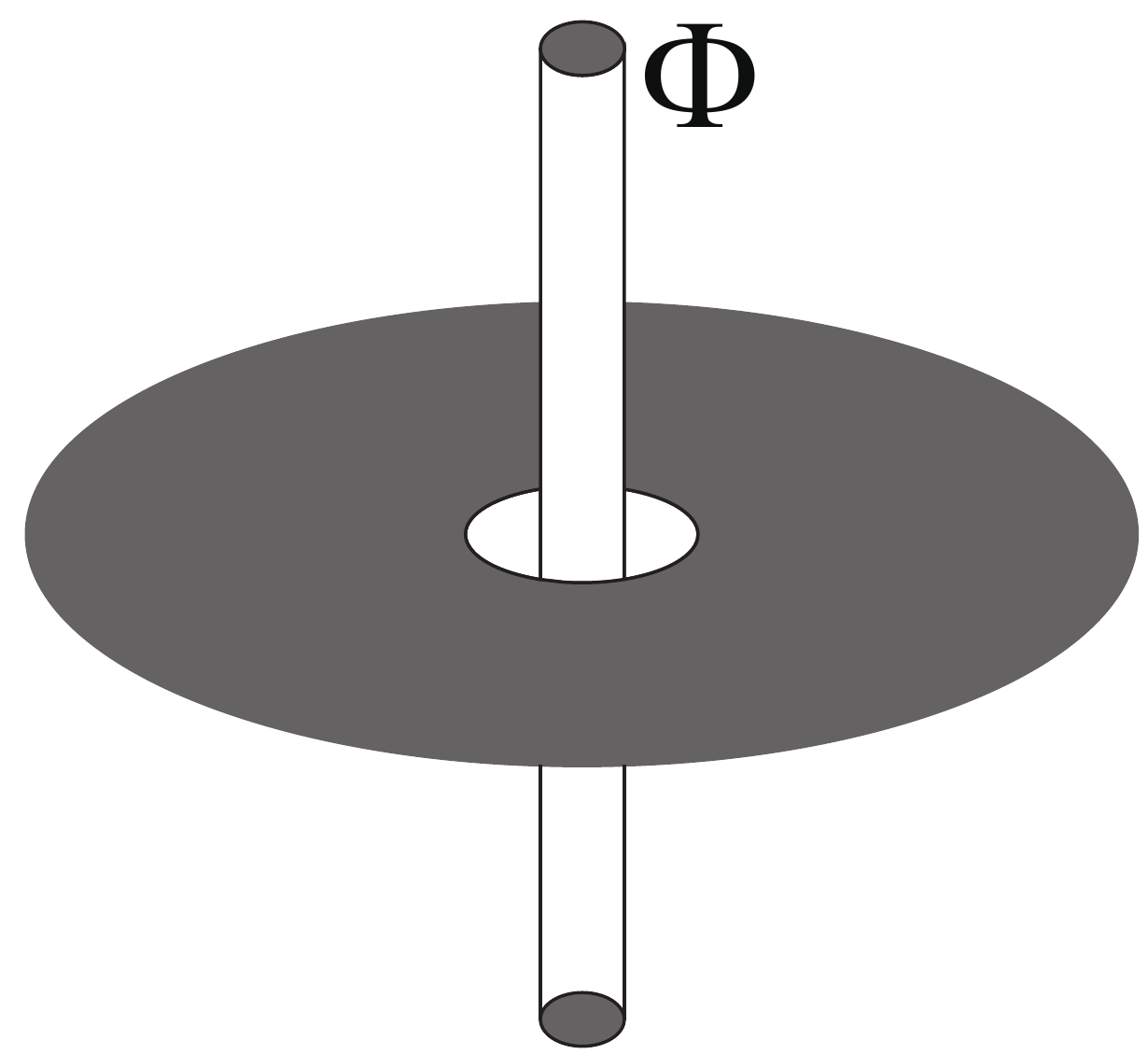}
		\label{pic-insert} }
	\subfigure[]{
		\includegraphics[width=0.25\columnwidth]{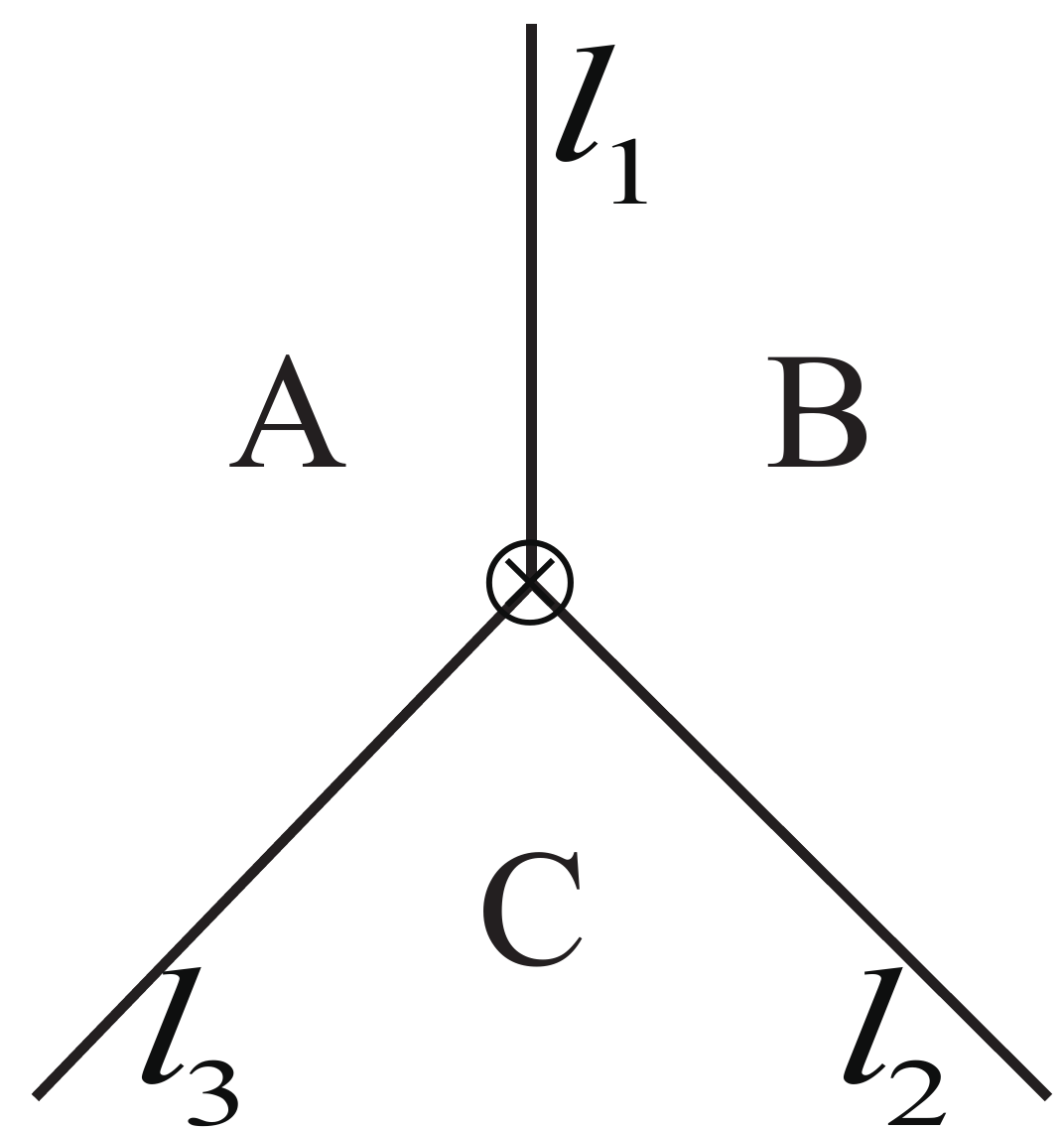}
		\label{pic-3regions} }
	\subfigure[]{
		\label{pic-3regiontrunc} 
		\includegraphics[width=0.28\columnwidth]{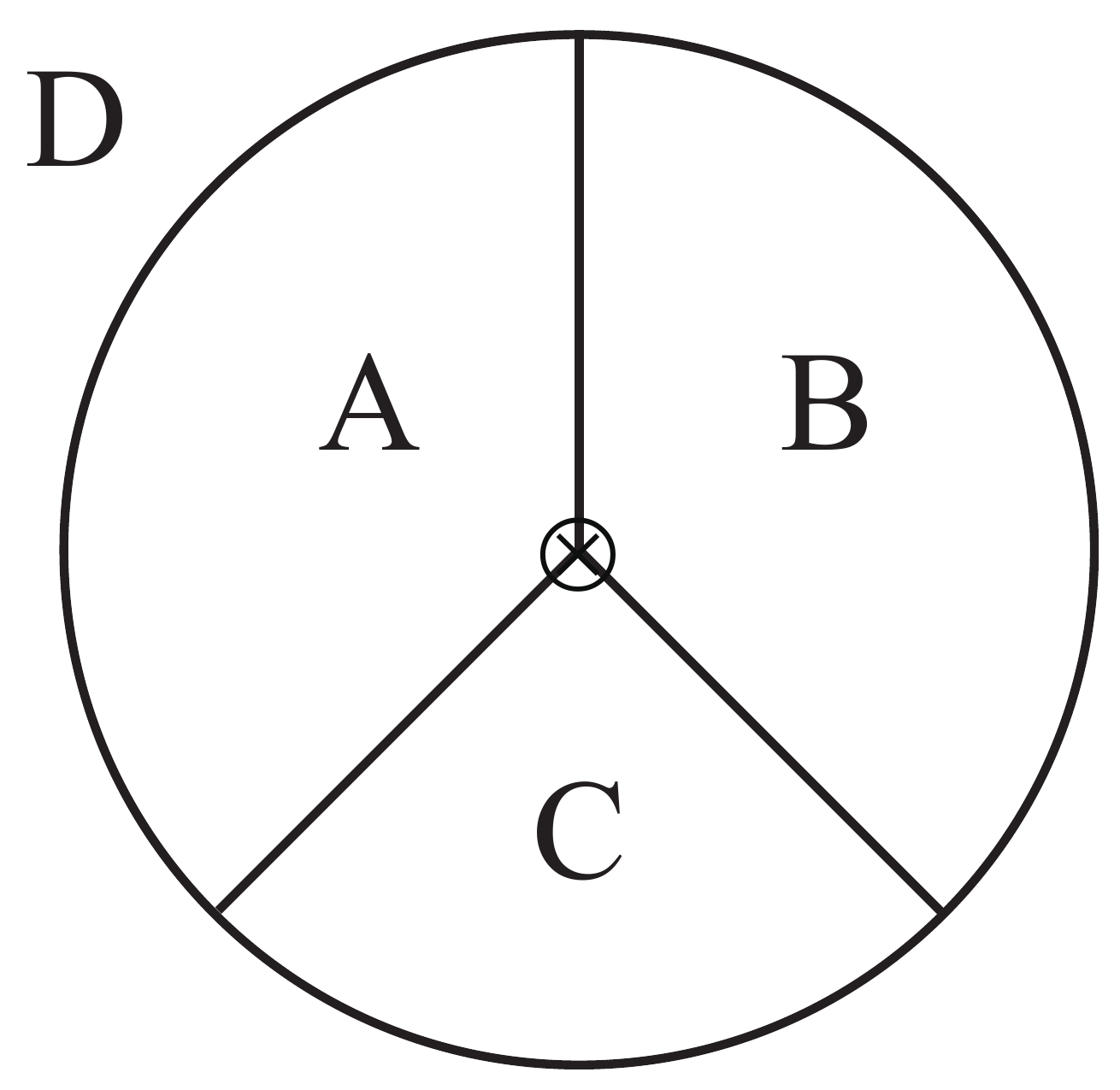}}
	\caption{(a) Insert a flux in the hole. (b) Divide the plane into three regions A, B, C. The intersection point is where a flux will be inserted. (c) Truncate the plane with a circle. Denote the region outside the circle by D. The intersection point is where a flux will be inserted.}
\end{figure}

In this paper, we propose a formula for the $\ztwo$ invariant for topological insulators in two dimensions, which remains valid with disorder.
Importantly, our approach is \emph{purely topological}, in the sense that we discard many geometrical information/choices such as distances and angles [see \refeq{eq-topQ}].
Moreover, we only require a mobility gap instead of a spectral gap. Similar to \refeq{eq-Kitaev}, the input of our formula is the projection $P$. Also similar to \refeq{eq-Kitaev}, our formula is essentially a local expression, in the sense that the contribution mainly comes from quantities near a point. As a result, one can expect to calculate it with sufficient precision by a truncation near that point.

This paper is organized as follows.
In Sec.~\ref{sec-intuition}, we explain the physics intuition and give a physical derivation of our formula. In Sec.~\ref{sec-formula}, we formally state our formula and show that it is well-defined.
In Sec.~\ref{sec-finite}, based on the theory of almost commuting matrices, we introduce a method to numerically calculate the invariant from a finite-size system.
We present some numerical results in Sec.~\ref{sec-numerical}.
In Sec.~\ref{sec-property}, we investigate the properties of our formula and sketch the proof of our main proposition.
To keep the paper more accessible, some technical details are gathered into the Appendix~\ref{sec-app1}.

\section{Intuition--Flux insertion and topological invariant}\label{sec-intuition}
In this section, we put Chern insulator/topological insulator on a punctured plane and insert fluxes at the origin [see Fig.~\ref{pic-insert}]. We will explain how the physics of flux insertion is related to topological invariant. This section aims to explain our intuition and provide a physical derivation of our formula, hence, some statements here may not very rigorous. We will establish our results carefully in the following sections. 

Recall the simple case, Chern insulators, which can be realized in integer anomalous quantum Hall systems. In this case, we have the well-known Thouless charge pump \cite{ThoulessPump}: when a flux unit is adiabatically inserted, it induces an annular electric field, which in turn produces a radial electric current due to the Hall effect. As a result, electrons are pushed away from (or close to, depending on the sign of the current) the origin for ``one unit". In Fig.~\ref{pic-Chernband} we draw the band structure for boundary states (near the puncture). Diagrammatically, when a flux unit is inserted, every occupied state moves toward top right to a lower level. 
\begin{figure}[htp]
	\centering
	\includegraphics[width=\columnwidth]{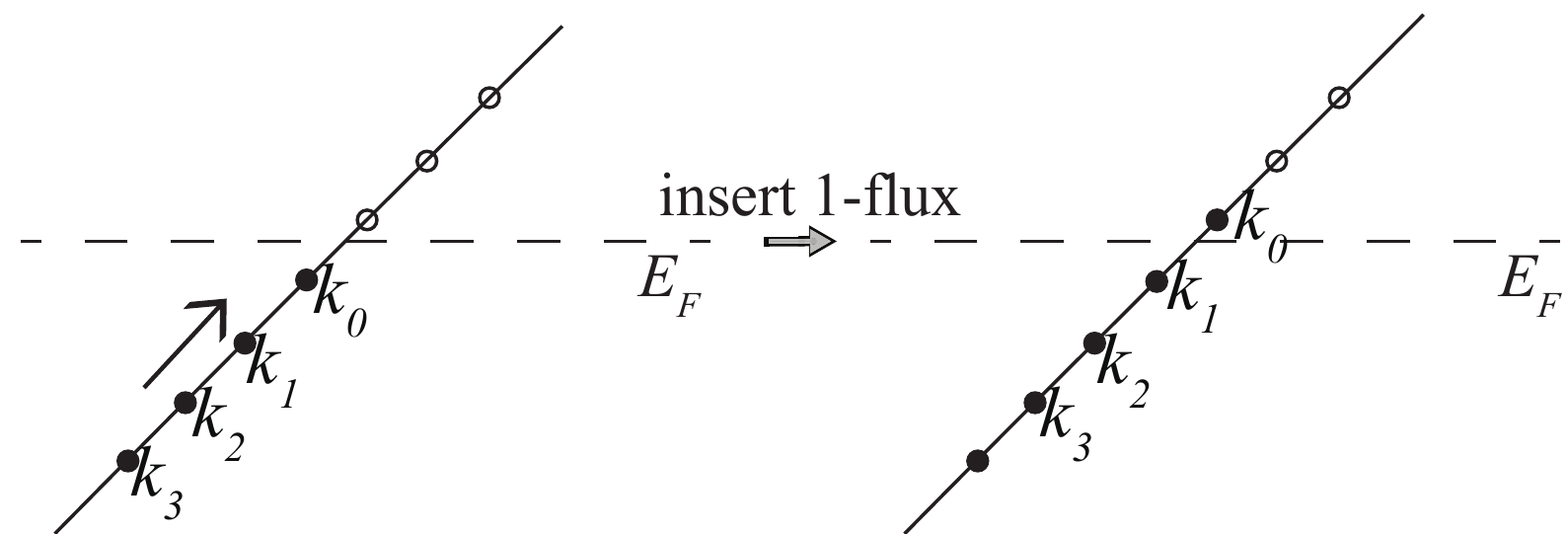}
	\caption{Band for boundary states of a Chern insulator. $\bullet$ means filled, $\circ$ means empty. After a unit flux insertion, every filled state moves towards top right to the next level. In this process, the label $k_i$ is tight to the electron, not the level.}\label{pic-Chernband}
\end{figure}

The many-body state after the flux insertion is not the ground state, because there is an filled state above the Fermi level. Compared to the ground states, we can see that the new ground state has one less electron ($k_0$ electron in Fig.~\ref{pic-Chernband}) than the old one (note that we are doing $\infty-\infty$, see comments below). The difference of number of electrons in ground states is exactly the Chern number. This is the idea behind Ref.~\onlinecite{Zindex}: 
\begin{align}\label{eq-relind}
&\text{Chern number}=\text{Ind}(P,P')\notag\\
=&\dim\Ker(P-P'-1)-\dim\Ker(P-P'+1),
\end{align}
where $P/P'$ is the projection operator onto filled states before/after the flux insertion, Ind is the relative index for a pair of projections, which intuitively counts the difference of their ranks (dimension of eigenvalue 1 subspace, number of filled levels in physics). Since the rank is just $\tr(P)$ and $\tr(P')$, one may expect
\begin{equation}
\text{Ind}(P,P')\sim\tr(P-P').
\end{equation}
This formula is indeed correct if $\tr(P-P')$ is well-defined---if $(P-P')$ is trace class \cite{Lax}. This is not the case for nontrivial Chern insulator though: $\Ind(P,P')$ is still well-defined \cite{Zindex}, but one needs a more complicated formula to evaluate it, which is essentially \refeq{eq-Kitaev}.

Now we turn to topological insulators. In this case, we adiabatically insert a $\frac{1}{2}$-flux quanta. As shown in Fig.~\ref{pic-TIband}, what happens is: energies for left-movers increase, while energies for right-movers decrease. If we assume (without loss of generality) the Fermi level is right below an empty state, the ground state after the flux insertion will have one more electron than before. We want to count the number of extra electrons to determine the Kane-Mele invariant $\nu_{\text{KM}}\modtwo$.
\begin{figure}[htp]
	\centering
	\includegraphics[width=\columnwidth]{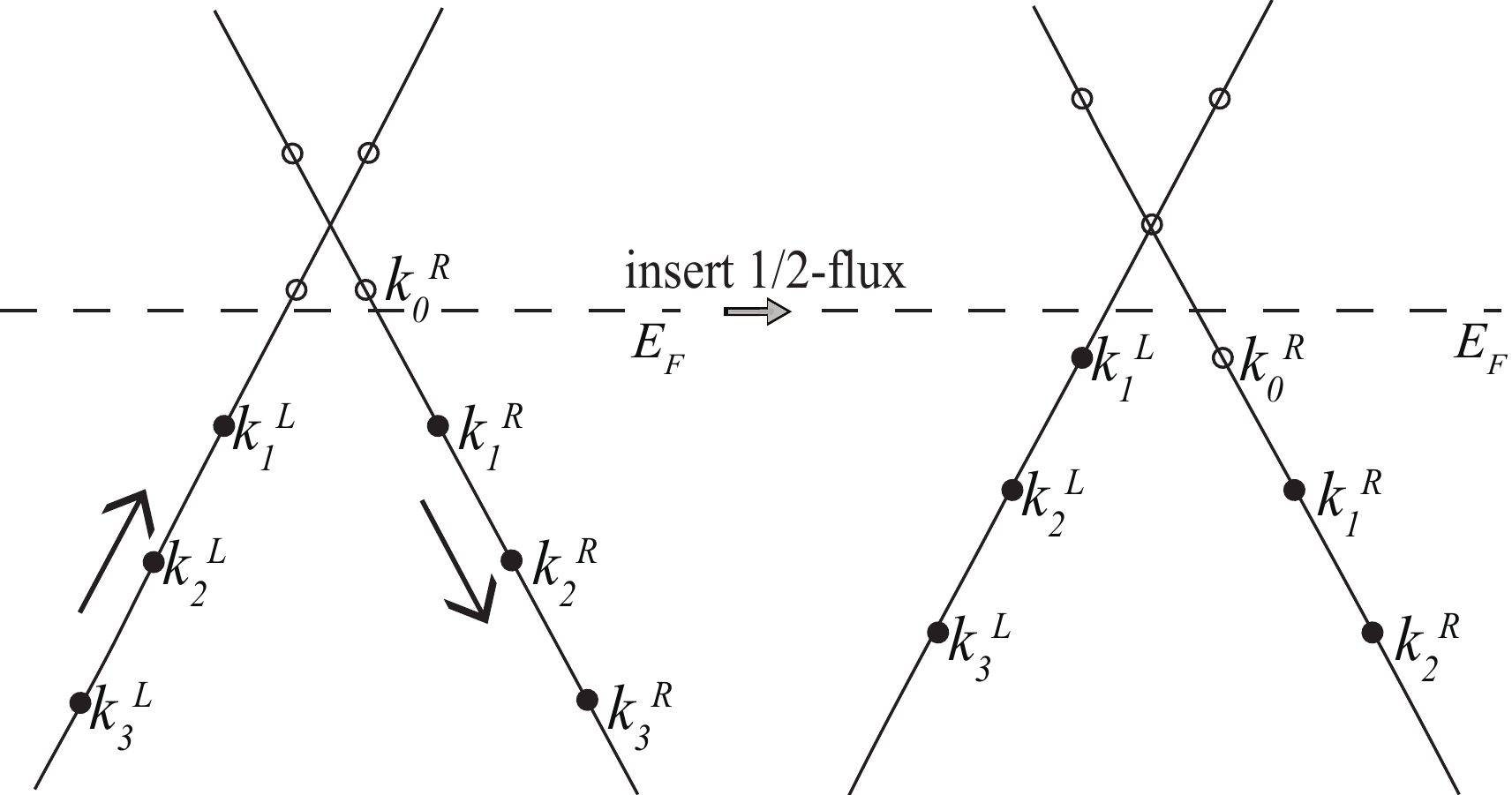}
	\caption{Band for boundary states of a topological insulator. After a half flux insertion, we get one more state under the Fermi level, which is geometrically near the vertex (flux).}\label{pic-TIband}
\end{figure}

To do this, we first count the number of electrons in a finite disk with radius $r$ (the system is still on an infinite plane, we just draw a virtual circle to define a disk). Due to time reversal symmetry, topological insulator have zero total Hall conductance, so the number of electrons inside the disk remains unchanged under adiabatic flux insertion. However, there is a vertex state ($k_0^R$ in Fig.~\ref{pic-TIband}) that is left empty, so in the new ground state the number of electrons in disk is increased by 1:
\begin{equation}\label{eq-dNdisk}
\Delta \braket{\text{No.~}\text{electrons in a (large) disk}}=1=\nu_{\text{KM}} \modtwo,
\end{equation}
where $\braket{}$ means ground state expectation value (note again that ground states before and after the flux insertion are different).

Since $P$ is the projection onto filled states, $H_0=1-P$ can be regarded as a spectral-flattened Hamiltonian (filled$=0$, empty$=1$). Denote $H_{\frac{1}{2}}$ to be the Hamiltonian after flux-insertion, consider $Q=1-H_{\frac{1}{2}}$ and corresponding projection matrix $\bQ$ (see Sec.~\ref{sec-formula} for details). We have
\begin{equation}
\begin{aligned}
&\braket{N_r}(\text{before})=\braket{\sum_{|i|<r}a_i^\dagger a_i}=\tr(P_r),\\
&\braket{N_r}(\text{after})=\tr(\bQ_r),\\
&\Delta \braket{N_r}=\tr{\bQ_r}-\tr{P_r}=\tr(\bQ_r-Q_r).
\end{aligned}
\end{equation}
Here $N_r$ is the number of electrons in disk $r$, $P_r,\bQ_r$ is the truncation of $P,\bQ$ ($\bQ_r$ means $(\bQ)_r$: spectral flatten before truncation). The last equation is because $P$ and $Q$ have the same diagonal elements (see Sec.~\ref{sec-formula}) and they are finite matrices.

Thus, we expect
\begin{dmath}\label{eq-guess}
	\nu_{\text{KM}}=\lim_{r\to\infty}\Delta \braket{N_r}=\lim_{r\to\infty}\tr(\bQ_r-Q_r)\sim\tr(\bQ-Q) \modtwo.
\end{dmath}

One may want to apply the same idea to Chern insulator. This will just lead to $0=0$. Indeed, we still have
\begin{equation}
\Delta \braket{\text{No.~}\text{electrons in a (large) disk}}=\lim_{r\to\infty}\tr(\bQ_r-Q_r).
\end{equation}
However, there will always be an electron go into (or out of) the disk adiabatically, which compensates the lost (or extra) state, so $\Delta \langle\text{No.~}\text{electrons in disk}\rangle$ in the left-hand side is always 0 in this case. This can also be seen from the (large) gauge equivalence between the two systems before and after a unit flux insertion. For the right-hand side, since $Q=P'$ in this case is already a projection, $\bQ=Q$, so the r.h.s is 0. The difference between topological insulators and Chern insulators is as follows: in the former case the number of electrons go through the boundary $r$ is 0 in average (because of zero Hall conductance), while in the latter it is nonzero and is essentially the Chern number.

As a side note, one may also consider the insertion of a unit flux and consider the difference between two ground states. A direct application of the relative index \refeq{eq-relind} gives 0. However, one may note that two terms (dimension of the kernel) in \refeq{eq-relind} come from left movers and right movers separately and one can therefore define the $\ztwo$ index with one kernel. This is the idea behind Ref.~\onlinecite{Akagi_kernel}.

\section{Formula for infinite system}\label{sec-formula}
In this section, we will carefully define the quantities in our main formula \refeq{eq-guess} and show its well-definedness.

The input of our formula will be the single-body projection operator $P$, which is related to the spectrum-flattened Hamiltonian $H_0=1-P$. For gapped states, $P$ decays at least exponentially \cite{Hastings_decay}: 
\begin{equation}
P_{\vx,\vy}<C_1e^{-C_2|\vx-\vy|}.
\end{equation} 
According to the Peierls substitution \cite{Peierls}, if we insert a 1/2-flux at a vertex, the new (single-body) Hamiltonian can be written as $H_{\frac{1}{2}}=1-Q$, where 
\begin{equation}
Q_{\vx,\vy}=s_{\vx,\vy}P_{\vx,\vy}.
\end{equation}
Here $s_{\vx,\vy}$ are phases such that for any loop $l=(\vx_1,\vx_2,\cdots,\vx_n=\vx_1)$, we have
\begin{equation}\label{eq-geoQ}
s_{l}\defeq\prod_{i=1}^n s_{\vx_i,\vx_{i+1}}=\begin{cases}
-1, & \text{if the vertex is in the loop} \\
1, & \text{otherwise}
\end{cases}.
\end{equation}
These phases can be chosen as follows: we divide the plane into three regions, as in Fig.~\ref{pic-3regions}. 

Let $n_{\vx,\vy}$ to be the number intersections of the straight line segment $(\vx,\vy)$ with three boundaries $l_1,l_2,l_3$, set
\begin{equation}\label{eq-phase}
s_{\vx,\vy}=(-1)^{n_{\vx,\vy}}.
\end{equation}
\newcommand{\primeup}{{}^{'}}
We call this gauge ``insert half fluxes along the boundaries". While $P$ is a projection, $Q$ no longer is. Actually, we have
\begin{dmath}
(Q^2-Q)_{\vx,\vy}=\sum_{\vz}s_{\vx,\vz}s_{\vz,\vy}P_{\vx,\vz}P_{\vz,\vy}-s_{\vx,\vy}P_{\vx,\vy}\\=-2s_{\vx,\vy}\sum_{\vz}\primeup P_{\vx,\vz}P_{\vz,\vy}.
\end{dmath}
where $\sum'$ means sum under constraint $s_{\vx\vy\vz}=-1$. Denote $V=Q^2-Q$. Since matrix elements of $P$  decays exponentially, $V$ is mainly supported around the vertex (hence the notation $V$) due to the constraint. To be specific, we have the following:
\begin{proposition}
$\exists C'_1,C'_2$, such that $|V_{\vx,\vy}|<C'_1e^{-C_2 r}$ where $r=\max\{|\vx|,|\vy|\}$.
\end{proposition}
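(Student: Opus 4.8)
The plan is to reduce the estimate to an elementary geometric fact about triangles that contain the flux vertex, and then carry out a convergent lattice sum. Recall that the excerpt has already derived
\[
V_{\vx,\vy}=-2\,s_{\vx,\vy}\!\!\sum_{\vz:\,s_{\vx\vy\vz}=-1}\!\!P_{\vx,\vz}P_{\vz,\vy},
\]
and that, by \refeq{eq-geoQ} applied to the triangular loop $\vx\to\vz\to\vy\to\vx$, the constraint $s_{\vx\vy\vz}=-1$ holds precisely when the vertex (taken to be the origin) is enclosed by the triangle with corners $\vx,\vy,\vz$. Combining this with the exponential decay $|P_{\vx,\vy}|\le C_1e^{-C_2|\vx-\vy|}$ and the triangle inequality on matrix elements gives
\[
|V_{\vx,\vy}|\le 2C_1^2\!\!\sum_{\vz:\,0\in\triangle\vx\vy\vz}\!\!e^{-C_2\ell(\vz)},\qquad \ell(\vz):=|\vx-\vz|+|\vz-\vy| .
\]
Everything now rests on a lower bound for the ``path length'' $\ell(\vz)$.

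The geometric lemma I would prove is: \emph{if the origin lies in $\triangle\vx\vy\vz$, then $\ell(\vz)\ge\max\{|\vx|,|\vy|,|\vz|\}$.} Writing the origin in barycentric coordinates, $0=\alpha\vx+\beta\vy+\gamma\vz$ with $\alpha,\beta,\gamma\ge0$ and $\alpha+\beta+\gamma=1$, one has for instance $|\vx|=|\vx-0|=|\beta(\vx-\vy)+\gamma(\vx-\vz)|\le(\beta+\gamma)\max\{|\vx-\vy|,|\vx-\vz|\}\le\max\{|\vx-\vy|,|\vx-\vz|\}$; since $|\vx-\vy|\le\ell(\vz)$ by the triangle inequality and $|\vx-\vz|\le\ell(\vz)$ trivially, we get $|\vx|\le\ell(\vz)$. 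The identical computation with $\vx,\vy,\vz$ cyclically permuted yields $|\vy|\le\ell(\vz)$ and $|\vz|\le\ell(\vz)$, which proves the lemma; note that no non-degeneracy was used, so degenerate cases (origin on an edge or at a corner, or collinear corners) need no separate treatment, and the only genericity assumption is the one already built into the definition \refeq{eq-phase} of the phases $s_{\vx,\vy}$.

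The rest is bookkeeping. With $r=\max\{|\vx|,|\vy|\}$ the lemma gives $\ell(\vz)\ge\max\{r,|\vz|\}\ge\tfrac12 r+\tfrac12|\vz|$ on the support of the sum, hence
\[
|V_{\vx,\vy}|\le 2C_1^2\,e^{-\frac{C_2}{2}r}\sum_{\vz}e^{-\frac{C_2}{2}|\vz|}=:C_1'\,e^{-\frac{C_2}{2}r},
\]
the lattice sum being finite because the sites form a uniformly discrete set. This is the asserted exponential decay in $r$; if the sharper rate $C_2$ of the statement is desired, one keeps instead the bound $\ell(\vz)\ge\max\{r,|\vz|\}$ and pays only a polynomial prefactor (from counting lattice sites in elliptical shells with foci $\vx,\vy$), which is then absorbed into any decay rate strictly below $C_2$. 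The one step that is not purely routine is the geometric lemma; once it is in hand, the proposition follows directly from the decay of $P$ and the structure of $V$ recorded above.
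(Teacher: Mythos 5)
Your proof is correct and follows essentially the same route as the paper: bound $|V_{\vx,\vy}|$ by the constrained sum of $e^{-C_2(|\vx-\vz|+|\vz-\vy|)}$, observe that the constraint forces $|\vx-\vz|+|\vz-\vy|\ge r$, split off $e^{-C_2 r/2}$, and sum the remainder. The only difference is that you actually prove (via barycentric coordinates) the geometric inequality the paper dismisses as ``obvious,'' and you correctly note that the literal rate $C_2$ in the statement should really be a new constant $C'_2$, consistent with the paper's own concluding bound.
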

\begin{proof}
Let us calculate $V_{\vx,\vy}$:
\begin{equation}
|V_{\vx,\vy}|=|2\sum_{\vz}\primeup P_{\vx,\vz}P_{\vz,\vy}|< 2C_1^2\sum_{\vz}\primeup e^{-C_2(|\vx-\vz|+|\vz-\vy|)}.
\end{equation}
From geometry, it is obvious that $|\vx-\vz|+|\vz-\vy|>r$ if $s_{\vx\vy\vz}=-1$, so the summation can be controlled by
\begin{equation}
2C_1^2e^{-\frac{C_2}{2}r}\sum_{\vz}e^{-\frac{C_2}{2}(|\vx-\vz|+|\vz-\vy|)}<C'_1e^{-C'_2 r}.
\end{equation}
(This is a pretty crude estimation but is enough.)
\end{proof}

In the following, we will refer call property as ``exponential decay property" (EDP). Intuitively, $Q^2-Q$ satisfies EDP means the deviation of $Q$ from a projection mainly comes from states near the vertex point. If we spectral flatten $Q$ to $\bQ$ (for eigenvalues $\lambda\leq\frac{1}{2}$, convert it to 0, otherwise convert it to 1), we anticipate that $Q-\bQ$ is mainly supported near the vertex. Actually $Q-\bQ$ also obeys EDP, but we do not need this result. We only need the following:
\begin{proposition}
	$Q-\bQ$ is trace class.
\end{proposition}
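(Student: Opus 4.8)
\emph{Proof plan.} The plan is to exhibit $Q-\bQ$ as a bounded operator times the trace-class operator $V=Q^{2}-Q$, and then invoke the ideal property of the trace class. First I would note that the spectral-flattening prescription is literally $\bQ=\chi_{(1/2,\infty)}(Q)$ in the Borel functional calculus, so $Q-\bQ=h(Q)$ with $h(\lambda)=\lambda-\chi_{(1/2,\infty)}(\lambda)$. Here $Q$ is bounded and self-adjoint: the phases $s_{\vx,\vy}$ have modulus one, so $Q$ inherits the exponential off-diagonal decay of $P$; a Schur test bounds $\norm{Q}$ by $\sup_{\vx}\sum_{\vy}|P_{\vx,\vy}|<\infty$, and self-adjointness is immediate from $s_{\vx,\vy}=s_{\vy,\vx}=\pm1$. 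Hence $\sigma(Q)$ is a compact subset of $\mathbb{R}$ and the Borel functional calculus applies.

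The crucial point is that $h$ vanishes at $\lambda=0$ and $\lambda=1$, exactly the zeros of $p(\lambda)=\lambda^{2}-\lambda$. I would therefore set $g(\lambda)\defeq h(\lambda)/p(\lambda)$ and check that $g$ extends to a \emph{bounded} Borel function on $\mathbb{R}$: near $\lambda=0$ one has $h(\lambda)=\lambda$, so $g(\lambda)=1/(\lambda-1)\to-1$; near $\lambda=1$ one has $h(\lambda)=\lambda-1$, so $g(\lambda)=1/\lambda\to1$; away from $\{0,1\}$ the factor $p$ is bounded away from $0$ while $h$ is bounded. A short case check gives $|g|\le2$ everywhere, the only discontinuity being a finite jump at $\lambda=1/2$, which is harmless. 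In particular \emph{no spectral gap of $Q$ around $1/2$ is assumed or needed}---this is exactly why one uses the Borel rather than the holomorphic functional calculus. Consequently $g(Q)$ is bounded with $\norm{g(Q)}\le2$.

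Since the functional calculus is a $*$-homomorphism and $g\cdot p=h$ on $\sigma(Q)$, this yields $Q-\bQ=h(Q)=g(Q)\,p(Q)=g(Q)(Q^{2}-Q)=g(Q)V$. By the previous proposition the matrix elements $V_{\vx,\vy}$ decay exponentially in $r=\max\{|\vx|,|\vy|\}$, so $\norm{V}_{1}\le\sum_{\vx,\vy}|V_{\vx,\vy}|<\infty$ after summing over the two-dimensional lattice; thus $V$ is trace class, and $g(Q)V$ is trace class because the trace-class operators form a two-sided ideal, with $\norm{Q-\bQ}_{1}\le2\,\norm{V}_{1}$. I expect the only genuinely delicate step to be the boundedness of $g$: the content is seeing that the apparent poles of $h/p$ at $0$ and $1$ are removable because $h$ vanishes there, so the argument needs no gap hypothesis; everything else is the soft ideal property plus the decay estimate already established.
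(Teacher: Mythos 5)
Your proof is correct and its core is the same as the paper's: the operator inequality $|Q-\bQ|\le 2|Q^{2}-Q|$ that the paper uses is exactly your bound $|g|\le 2$ pushed through the functional calculus (the paper phrases it as the scalar inequality $|x-\bar{x}|\le 2|x^{2}-x|$ plus the remark that $Q-\bQ$ and $Q^{2}-Q$ commute, i.e.\ are simultaneously diagonalized by the spectral decomposition of $Q$). The only place you genuinely diverge is the sub-step showing $V=Q^{2}-Q$ is trace class: you bound $\norm{V}_{1}$ by the entrywise sum $\sum_{\vx,\vy}|V_{\vx,\vy}|$, which converges since the exponential decay in $r=\max\{|\vx|,|\vy|\}$ gives a sum of order $\sum_{r}r^{3}e^{-C'_2 r}<\infty$; the paper instead invokes the corollary of its eigenvalue-counting Lemma~2 in the appendix (any EDP operator is trace class). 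Your route is more elementary and self-contained for this particular step; the paper's lemma is heavier machinery but is needed anyway for the quantitative spectral estimates in the additivity proof, which is presumably why it is reused here. Both arguments are sound.
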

\begin{proof}
$|x-\bar{x}|\leq 2|x^2-x|$ for $\forall x\in\mathbb{R}$, so $|Q-\bQ|\leq 2|Q^2-Q|=2|V|$ as an operator (note that they commute). Since $V$ obeys EDP, $V$ must be trace class (see the corollary after Lemma \ref{lemma-est} in appendix \ref{sec-addproof}), so is $Q-\bQ$.
\end{proof}
Therefore, it is legal to define a ``trace over vertex states" as
\begin{equation}\label{eq-ourformula}
\trv(Q)=\tr(Q-\bQ).
\end{equation}
Note that in the definition of $\bQ$ we can arbitrarily choose the chemical potential $\mu\in(0,1)$, so the $\trv(Q)$ should be naturally understood as mod 1. In the case of topological insulator, $Q$ has time reversal symmetry, every states is Kramers paired, so $\trv(Q)$ can be naturally understood as mod 2. We will see in the following that it is $\trv(Q)$ mod 2 [instead of $\trv(Q)$ itself for a fixed ``chemical potential"] that has good properties. Also note that $Q$ is not trace class in general, so we cannot define $\trv(Q)$ as $\tr(Q)$ mod 2.

According to the above analysis, this expression is well-defined and the contributions mainly come from states near the vertex; it is a local expression. Interestingly, this local expression turns out to be independent of the flux-insertion point we choose; it only depends on the state itself. Moreover, it is an integer and is topologically invariant. 
Our main proposition is as follows:

\noindent
\textbf{Main Proposition} $\trv(Q)$ equals to the Kane-Mele invariant.

The derivation of our proposition is in Sec.~\ref{sec-property} and the appendix~\ref{sec-app1}. Before going on, we give three comments on our formula. 
\begin{enumerate}
    \item There is another construction of $Q$, closely related to the one given by \refeq{eq-geoQ}:
\begin{equation}\label{eq-topQ}
Q=\begin{bmatrix}
{AA}       & -{AB} & -{AC} \\
-{BA}       & {BB} & -{BC} \\
-{CA}       & -{CB} & {CC} 
\end{bmatrix},
\end{equation}
where ${AB}$ means $P_{AB}$, a block in the original matrix $P$.  If we consider a circle with many sites on it, it still gives us a total phase $-1$. In this case,  
\begin{equation}
Q^2-Q=2\begin{bmatrix}
0       & ACB & ABC \\
BCA       & 0 & CAB \\
CBA       & BAC & 0 
\end{bmatrix},
\end{equation}
where $ACB$ means $P_{AC}P_{CB}$, etc. It is still concentrated near the vertex (satisfies EDP), as long as the partition is good\footnote{For example, the one in Fig.~\ref{pic-3regions} is good. However, if we rotate $l_2$ towards $l_1$ and deform it a little bit so they are parallel at infinity, then $Q^2-Q$ does not satisfy EDP and convergence problem will occur.}. So we can follow the same procedure to define a new $\trv(Q)$.

The $Q$ defined here is \emph{not} unitary equivalent to the one in \refeq{eq-geoQ}---they have different spectra in general. However, in Sec.~\ref{sec-property} we will show that $\trv(Q)$ defined from them are equal (mod 2).  We call the $Q$ in \refeq{eq-topQ} the topological one, denoted by $Q^t$, because its definition does not depend on the geometric information such as ``straight line segments". The $Q$ in \refeq{eq-geoQ} will be called the geometric one, denoted by $Q^g$. It has the advantage of gauge invariance and many quantities [like spec$(Q)$] defined from it are manifestly independent of the partition. 

\item For systems in the DIII class ($\text{TRS}^2=-1$, $\text{PHS}^2=1$ where TRS is time-reversal symmetry and PHS is particle-hole symmetry), our formula can be simplified. 

Indeed, since the original system has PHS:
\begin{equation}
    K_{ph}^\dagger\overline{(2P-1)}K_{ph}=2P-1,
\end{equation} 
where $(2P-1)$ is the spectral-flattened Hamiltonian with spectrum=$\{\pm1\}$. $K_{ph}$ is an onsite action, and commutes with the operation from $P$ to $Q$, so the same equation holds for $Q$. Therefore the spectra of $Q$ is symmetric with respect to $1/2$:
\begin{equation}
\sigma(Q)=1-\sigma(Q).
\end{equation}
Now, for a spectrum $q$ such that $q\neq\frac{1}{2}$, the Kramers degeneracy and PHS provides us a four-fold $\{q,q,1-q,1-q\}$, which contributes 0 to $\trv(Q)\modtwo$. So
\begin{equation}\label{eq-d3}
\nu=\trv{Q}=\text{No.~}\{\text{Kramers pairs at }\tfrac{1}{2}\}\modtwo.
\end{equation}

\item The input $P$ is the correlation matrix for an infinite system. If we start with a finite system, say a topological insulator on the sphere, then our formula always gives 0.

Mathematically, this is because both $\tr(\bQ)$ and $\tr(Q)=\tr(P)$ are even due to time-reversal symmetry. Physically, it is because when inserting a flux at some point, it is unavoidable to insert another flux at somewhere else for a closed geometry, then our formula counts the vertex states at both points. To get the right invariant, we need to ``isolate" the physics at one vertex.
\end{enumerate}

\section{Approximation from finite system}\label{sec-finite}
Although the input of our formula is an infinite-dimensional operator $P$, our formula is a trace of vertex states, which should only depends on the physics near the origin. Let us truncate the plane with a circle $r$ [see Fig.~\ref{pic-3regiontrunc}]. Denote $P_N$ and $Q_N$ to be the truncation of $P$ and $Q$, where $N$ is the number of sites inside the circle, $N\sim r^2$. 
We expect that one can approximate the invariant with data near the origin, i.e.  with matrix elements of $P_N$ or equivalently $Q_N$. 

However, a naive limit $\lim_{N\to\infty}\tr(Q_N-\overline{Q_N})$ is wrong: it will give $\lim\tr(P_N)\modtwo$ since $\tr(\overline{Q_N})$ is even. This is because $\overline{Q_N}\neq(\overline{Q})_N$. Physically, $Q_N$ and $Q$ do have similar ``vertex states". However, unlike $Q$, $Q_N$ also includes boundary contributions [see \refeq{eq-QN2}], which need to be excluded. 

We claim that we can use the following algorithm to approximate our invariant.
\begin{itemize}
	\item Construct a matrix $V_N$ by
	\begin{equation}
		V_N=-2s_{\vx,\vy}\sum_{\substack{\vz\in(ABC)\\s_{\vx\vy\vz}=-1}}P_{\vx,\vz}P_{\vz,\vy}.
	\end{equation}
$V_N$ will be almost commuted with $Q_N$ and it will tell us whether a state is near the vertex or the boundary.
	\item Find approximations $Q'_N, V'_N$ for $Q_N, V_N$ so that they indeed commute\footnote{In practice, there are some arbitrariness to find $Q'_N, V'_N$. What we do is a joint approximation diagonalization (JAD) and then make them commute according to some rules. For example, one may make all $v'$ such that $|v'|>\epsilon$ to zero. Another rule is indicated in Sec.~\ref{sec-numerical}.}.
	\item Simultaneously diagonalize $Q'_N$ and $V'_N$ to get pairs of eigenvalues $(q',v')$. Sum over all the eigenvalues $q'$ such that $v'\neq 0$. 
	
	The summation will converge to $\trv(Q)$ as $N\to\infty$.
\end{itemize}

In the following we explain the algorithm in detail. First of all, we have:
\begin{dmath}\label{eq-QN2}
(Q_N^2-Q_N)_{\vx,\vy}=-s_{\vx,\vy}[2\sum_{\substack{\vz\in ABC\\s_{\vx\vy\vz}=-1}}P_{\vx,\vz}P_{\vz,\vy}+\sum_{\vz\in D}P_{\vx,\vz}P_{\vz,\vy}]
\defeq (V_N)_{\vx,\vy}+(W_N)_{\vx,\vy}.
\end{dmath}
Here, $V_N$ is supported near the center, while $W_N$ is supported near the boundary. This means the deviation of $Q_N$ to a projection happens both near the vertex and the boundary. 

We can also work in the topological construction of $Q$. In this case,
\begin{align}
Q_N^2-Q_N &= 2\left[\!\begin{smallmatrix}
0       & ACB & ABC \\
BCA       & 0 & CAB \\
CBA       & BAC & 0 
\end{smallmatrix}\!\right] - \left[\!
\begin{smallmatrix}
ADA     & ADB & ADC \\
BDA       & BDB & BDC \\
CDA       & CDB & CDC 
\end{smallmatrix}\!\right] \notag\\
&=V_N+W_N.
\end{align}
In both constructions, $Q_N, V_N, W_N$  should almost commute, and $V_N, W_N$ are almost orthogonal, since they are mainly supported in different regions (``almost" means relevant expressions approach 0 as $N\to\infty$).
\begin{proposition}
(1) $Q_N, V_N, W_N$ defined above satisfies
\begin{equation}
\begin{aligned}
\norm{[Q_N, V_N]}<\epsilon&,~\norm{[Q_N, W_N]}<\epsilon, \\
\norm{V_NW_N} <\epsilon&,~\norm{W_NV_N}<\epsilon,
\end{aligned}
\end{equation}
where the norm $\norm{\cdot}$ is the $L^2$ norm (maximal singular value), $\epsilon\sim p(r)e^{-r}$ where $p(r)$ is a polynomial of $r$. 

(2) There exist Hermitian matrices $Q'_N, V'_N, W'_N$ as approximations of $Q_N, V_N, W_N$ in the sense that
\begin{equation}
	\norm{Q_N-Q'_N}<\rho^2, ~\norm{V_N-V'_N}<\rho,~\norm{W_N-W'_N}<\rho,
\end{equation}
such that
\begin{align}
&[Q'_N, V'_N]=[Q'_N, W'_N]=V'_NW'_N=W'_NV'_N=0\\
&Q_N^{'2}-Q'_N=V'_N+W'_N.
\end{align}
Here $\rho$ can be chosen as $F(\epsilon)\epsilon^{1/10}$ (independent of $N$) where the function $F(x)$ grows slower than any power of $x$. 
\end{proposition}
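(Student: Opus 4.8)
The two parts have rather different flavors, so I would attack them separately.

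For part~(1) I would reduce everything to the exponential decay of the entries of $P$. Writing $\Pi_N$ for the coordinate projection onto the $N$ sites inside the disk one has $Q_N=\Pi_NQ\Pi_N$, and (up to the sign conventions fixed by the partition) $V_N$ agrees with $\Pi_NV\Pi_N$ modulo an $O(\mathrm{poly}(r)e^{-C_2r})$ operator tail, while $W_N=-\Pi_NQ(1-\Pi_N)Q\Pi_N$ modulo the same. Since $Q_N$ commutes identically with $Q_N^2-Q_N=V_N+W_N$, we get $[Q_N,V_N]=-[Q_N,W_N]$, so only one commutator must be estimated. Expanding $[Q_N,W_N]$ in matrix elements, every surviving term is a product of entries of $Q$ (hence of $P$) along a path from $\vx$ to $\vy$ that is forced through the region $D$; after subtracting the infinite-operator cancellation $[Q,V]=0$ the residual paths must cross the truncation circle, so each entry is exponentially small in the distance to it, and a Schur test -- whose row and column sums pick up a factor from the $O(r)$ boundary sites -- yields $\norm{[Q_N,W_N]}<p(r)e^{-r}$. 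The estimate $\norm{V_NW_N}<\epsilon$ is softer still, since $V_N$ lives near the origin and $W_N$ near the circle of radius $r$. The topological construction \eqref{eq-topQ} is handled identically, replacing straight-line intersection numbers by its block structure.

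Part~(2) is an instance of the principle that almost commuting self-adjoint matrices lie near commuting ones, with two complications: there are three operators, and the exact relation $Q_N^2-Q_N=V_N+W_N$ must survive. What makes this tractable -- and sidesteps the $K$-theoretic obstruction to Lin-type theorems for triples of Hermitian matrices -- is that $V_N$ and $W_N$ are geometrically separated, so one never truly faces three coupled matrices. Concretely, for a cutoff $\eta$ let $E_V,E_W$ be the spectral projections of $V_N,W_N$ onto eigenvalues of modulus larger than $\eta$; inverting $V_N,W_N$ on those ranges turns $\norm{V_NW_N}<\epsilon$ into $\norm{E_VE_W}<\epsilon/\eta^2$, and a short computation using $\norm{[V_N,W_N]}<2\epsilon$ then gives $\norm{E_VW_NE_V},\norm{E_WV_NE_W}\lesssim\epsilon/\eta^2+\eta$. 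I would straighten $E_V$, $E_W$ and their common complement into a genuine orthogonal decomposition $1=P_V+P_W+P_0$ (Kato's rotation trick), using smoothed rather than sharp cutoffs so that $\norm{[Q_N,P_\bullet]}\lesssim\epsilon/\eta$ by functional calculus; choosing $\eta$ to balance these competing errors (the cutoff that separates $V_N$ from $W_N$ degrades commutation with $Q_N$) is the first delicate point.

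On the bulk block $P_0$ both $V_N$ and $W_N$ are $O(\eta)$, so $Q_N|_{P_0}$ is $O(\eta)$-close to a projection; I flatten it and set $V'_N=W'_N=0$ there, making the quadratic relation exact on $P_0$. On the vertex block $P_V$ one has $W_N\approx 0$ and two genuinely almost commuting Hermitian matrices $Q_N|_{P_V}$ and $V_N|_{P_V}$ linked by $Q_N^2-Q_N\approx V_N$; here I invoke the quantitative two-variable Lin-type theorem to obtain commuting approximants and then, in their joint eigenbasis, for each pair $(q',v')$ of eigenvalues redefine $v':={q'}^2-q'$ -- a change of the order of the current error since $V_N\approx Q_N^2-Q_N$ -- which restores the relation on $P_V$; the boundary block $P_W$ is symmetric with $V$ and $W$ interchanged. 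Discarding the (small) off-block part of $Q_N$ and reassembling yields $Q'_N,V'_N,W'_N$ with all commutators identically zero and $Q_N^{'2}-Q'_N=V'_N+W'_N$ exactly; the sharper bound $\norm{Q_N-Q'_N}<\rho^2$ reflects that $Q'_N$ is pinned by the quadratic relation once $V'_N+W'_N$ is fixed, so the slack $\rho$ in $V'_N,W'_N$ only propagates to second order in $Q_N$. Tracking how these fractional powers compound through the cutoff step and through the Lin-type input -- available only with a slowly growing rather than an explicitly H\"older constant -- produces the stated $\rho=F(\epsilon)\epsilon^{1/10}$. The main obstacle I anticipate is precisely this last accounting: arranging a single three-block decomposition simultaneously compatible with $Q_N$, $V_N$ and $W_N$, and threading the exact relation $Q'^2-Q'=V'+W'$ through an approximate simultaneous diagonalization without exhausting the error budget.
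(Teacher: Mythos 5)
Your part (1) matches the paper, which dismisses it as a ``straightforward calculation''; your reduction to the decay of $P$, the identity $[Q_N,V_N]=-[Q_N,W_N]$, and the near-disjoint supports of $V_N$ and $W_N$ are exactly the right ingredients. For part (2), however, you take a genuinely different and considerably heavier route than the paper. The paper applies the quantitative Lin/Hastings theorem exactly \emph{once}, to the single pair $(Q_N,V_N)$, obtaining commuting Hermitian $Q'_N,V'_N$ within $\delta=E(1/\epsilon)\epsilon^{1/5}$, and then simply \emph{defines} $W'_N\defeq Q_N'^2-Q'_N-V'_N$. Since $W'_N$ is a polynomial in the commuting pair, all three matrices commute automatically and the quadratic relation holds by fiat --- no spectral cutoff $\eta$, no Kato rotation, no three-block decomposition, and no separate Lin applications on vertex and boundary blocks. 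The only remaining defect, $V'_NW'_N\neq 0$, is repaired by a one-line eigenvalue surgery in the joint eigenbasis: $\norm{V'_NW'_N}\lesssim\delta$ forces at least one member of each eigenvalue pair $(v',w')$ below $\sqrt{\delta}$, and that member is zeroed while keeping $v'+w'$ fixed, which preserves commutativity, the relation $Q_N'^2-Q'_N=V'_N+W'_N$, and $Q'_N$ itself. This last point is the actual source of the $\rho$ versus $\rho^2$ asymmetry: $Q'_N$ is untouched by the surgery, so $\norm{Q_N-Q'_N}$ stays at $\delta=\rho^2$ while $V'_N,W'_N$ degrade to $\sqrt{\delta}=\rho\sim F(\epsilon)\epsilon^{1/10}$ --- a more concrete mechanism than your heuristic that ``$Q'_N$ is pinned by the quadratic relation.'' Your block-decomposition scheme is plausibly workable and gives a more vivid geometric picture (and, like the paper, it evades the $K$-theoretic obstruction to approximating triples, though the paper does so by never facing three independent matrices at all), but it pays for this with a long chain of competing error parameters ($\eta$ vs.\ $\epsilon/\eta^2$ vs.\ $\epsilon/\eta$ vs.\ the Lin exponent) that the paper's algebraic shortcut avoids entirely.
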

\begin{proof}
	(1) Straight forward calculation.
	
	(2) It is easy to check $||Q_N||$ and $||V_N||$ are finite, independent of $N$ (one way to do this is to prove it for the topological construction $Q$ in \refeq{eq-topQ} and use the relationship between two constructions as in Property \ref{prop-topandgeo}). According to Lin's theorem \cite{Lin}, $\exists Q'_N, V'_N$ such that $\norm{Q_N-Q'_N}, \norm{V_N-V'_N}<\delta$ and $[Q'_N, V'_N]=0$. Moreover \cite{Hastings}, we can choose $\delta=E(1/\epsilon)\epsilon^{1/5}$ where the function $E(x)$ grows slower than any power of $x$, independent of $N$. 
	
	Define $W'_N={Q'}_N^2-Q'_N-V'_N$, then $W'_N,Q'_N,V'_N$ can be simultaneously diagonalized. Since $W_N=Q_N^2-Q_N-V_N$, $\norm{Q_N-Q'_N}, \norm{V_N-V'_N}<\delta$, so we have $\norm{W_N-W'_N}\lesssim\delta$ and 
	\begin{equation}
	\norm{V'_NW'_N}\!=\!\norm{(V'_N\!-\!V_N\!+\!V_N)(W'_N\!-\!W_N\!+\!W_N)}\!\lesssim\! \epsilon+\delta\!\sim\!\delta.
	\end{equation}
	This means for each pair of eigenvalues $(v'_N,w'_N)$, at least one of them should be smaller than $\sqrt{\delta}$. We manually make these eigenvalues to be 0, while fixing $v'_N+w'_N$. 
	
	The new $V'_N$ and $W'_N$ would be strictly orthogonal, and still commute with $Q'_N$, and still obeys $Q_N^{'2}-Q'_N=V'_N+W'_N$. Moreover, now $\norm{V_N-V'_N}\sim\delta+\sqrt{\delta}\sim\sqrt{\delta}\defeq\rho$. 
	
\end{proof}
Having $Q'_N, V'_N, W'_N$ exactly commute, and $V'_N, W'_N$ exactly orthogonal, we use them to distinguish vertex contributions and boundary contributions. We simultaneously diagonalize them and get triples $(q',v',w')$. Different contributions are then identified as follows (the reason for this identification is evident) [see Fig.~\ref{fig:numerics}]:
\begin{itemize}
	\item $v'\neq 0, w'=0$: vertex states
	\item $v'=0, w'\neq 0$: boundary states
	\item $q'=0~\text{or}~1,v'=w'=0$: bulk states
\end{itemize}
We anticipate that the summation of $q'$ over vertex states will be a good approximation of $\trv(Q)$.
\begin{proposition}[finite size approximation]\label{prop-finitesize}
	After the above procedure,
	\begin{equation}\label{eq-appro}
	\lim_{N\to\infty}\sum_{\substack{\text{vertex} \\ \text{states}}}q'=\trv(Q).
	\end{equation} 
\end{proposition}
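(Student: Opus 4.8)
The plan is to recognize $\sum_{\text{vertex}}q'$ as a trace over a subspace localized near the vertex, transplant that subspace to the infinite system, and there identify the result with $\trv(Q)=\tr(Q-\bQ)$. Since $Q'_N,V'_N,W'_N$ commute exactly and $V'_NW'_N=0$, let $\Pi_v$ be the spectral projection of $V'_N$ onto its nonzero eigenvalues; on $\operatorname{Ran}\Pi_v$ one automatically has $w'=0$, so $\sum_{\text{vertex}}q'=\tr(Q'_N\Pi_v)$. Because $\overline{Q'_N}$ is a function of the same commuting family, $\overline{Q'_N}\Pi_v$ is an orthogonal projection and has integer trace, whence
\begin{equation}\label{eq-redmod1}
\tr(Q'_N\Pi_v)\equiv\tr\!\big((Q'_N-\overline{Q'_N})\Pi_v\big)\pmod 1,
\end{equation}
and the same congruence holds modulo $2$ once the Kramers pairing imposed by time-reversal symmetry is used. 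It therefore suffices to show that the right-hand side of \refeq{eq-redmod1} converges to $\tr(Q-\bQ)$ as $N\to\infty$.

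\emph{Uniform trace-norm control.} From $|x-\bar x|\le 2|x^{2}-x|$ for real $x$ (as in the proof that $Q-\bQ$ is trace class), $Q_N^{'2}-Q'_N=V'_N+W'_N$, and $W'_N\Pi_v=0$, one gets $|Q'_N-\overline{Q'_N}|\,\Pi_v\le 2|V'_N|\,\Pi_v$, hence $\norm{(Q'_N-\overline{Q'_N})\Pi_v}_1\le 2\norm{V'_N}_1$. Now $V'_N$ is $\rho$-close in operator norm to $V_N$, which obeys the exponential bound of Proposition~1 uniformly in $N$ and differs from the compression $\Pi_N V\Pi_N$ of the trace-class operator $V$ by a term negligible in trace norm; by pinching $\norm{V_N}_1\le\norm{V}_1+o(1)$, while $\norm{V'_N-V_N}_1\le N\rho\to 0$ since $N\sim r^{2}$ and $\rho=F(\epsilon)\epsilon^{1/10}$ with $\epsilon\sim p(r)e^{-r}$. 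Thus $\norm{V'_N}_1\le M$ for some $M$ independent of $N$, and likewise $\norm{Q-\bQ}_1\le 2\norm{V}_1\le M$, so both sides of the desired limit are uniformly trace-class and convergence can be checked under a fixed trace-norm domination. (By contrast $W'_N$ alone is \emph{not} uniformly trace-class: its norm grows like $r$, which is exactly why restricting to $\operatorname{Ran}\Pi_v$ is essential and why the naive limit $\lim_N\tr(Q_N-\overline{Q_N})$ fails.)

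\emph{Transplantation and spectral matching.} The EDP bound for $V_N$ and $\norm{V_N-V'_N}<\rho$ give, by a Combes--Thomas / almost-invariant-subspace argument, that every eigenvector of $V'_N$ with eigenvalue exceeding a fixed multiple of $\rho$ decays exponentially away from the vertex, uniformly in $N$; the uniform trace-norm bound forces $\operatorname{rank}\Pi_v=O(1/\rho)$, so $\operatorname{rank}(\Pi_v)\cdot\norm{Q_N-Q'_N}\le O(1/\rho)\cdot\rho^{2}\to 0$ and $\norm{(1-\Pi_{\le r/2})\Pi_v}$ is super-polynomially small in $r$, where $\Pi_{\le R}$ projects onto the sites within distance $R$ of the vertex. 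For $R<r$ truncation is invisible near the vertex, $\Pi_{\le R}Q_N\Pi_{\le R}=\Pi_{\le R}Q\Pi_{\le R}$ and $\Pi_{\le R}V_N\Pi_{\le R}=\Pi_{\le R}V\Pi_{\le R}$. Combining these, the compression of the commuting pair $(Q'_N,V'_N)$ to $\operatorname{Ran}\Pi_v$ converges, eigenvalue by eigenvalue, to the compression of $(Q,V)$ to the spectral subspace of $Q$ on which $Q^{2}-Q\neq 0$. Since $V=Q^{2}-Q$ is compact, $\sigma_{\mathrm{ess}}(Q)\subseteq\{0,1\}$, the eigenvalues $\{q_k\}$ of $Q$ outside $\{0,1\}$ are isolated with $\sum_k\min\{q_k,1-q_k\}<\infty$, and $\tr(Q-\bQ)=\sum_k(q_k-\bar q_k)$. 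The matching shows $\{q':v'\neq 0\}$ exhausts $\{q_k\}$ in the limit, plus at most finitely many spurious values tending to $0$ or $1$. Passing to the limit in \refeq{eq-redmod1} under trace-norm domination, and noting that a value near $0$ contributes $o(1)$ to $\tr((Q'_N-\overline{Q'_N})\Pi_v)$ while one near $1$ contributes $-o(1)$, we obtain $\tr(Q'_N\Pi_v)\to\sum_k(q_k-\bar q_k)=\trv(Q)$ modulo $1$, and modulo $2$ after invoking the Kramers pairing; this is \refeq{eq-appro}.

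\emph{Main obstacle.} The crux is the quantitative interplay in the transplantation step: one must prevent the infinitely many eigenvalues of $Q$ accumulating at $0$ and $1$---whose distances to $\{0,1\}$ are summable but not individually small---from being reshuffled by the truncation and by the Lin's-theorem perturbation into a divergent spurious contribution to $\sum_{\text{vertex}}q'$. This is precisely why a \emph{quantitative} Lin's theorem is indispensable: the perturbation size $\rho=F(\epsilon)\epsilon^{1/10}$, $\epsilon\sim p(r)e^{-r}$, must beat $\operatorname{rank}\Pi_v$ together with the slowly growing count of near-$\{0,1\}$ eigenvalues inside the localization window, and the chemical potential $\mu$ defining $\bQ$ must be kept away from the discrete fractional spectrum of $Q$ so that spectral flattening is Lipschitz there. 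A subsidiary technical point is establishing the Combes--Thomas decay for $V'_N$ rather than $V_N$: the error $V_N-V'_N$ is only globally small in operator norm, not local, so one must work with an almost-invariant localized subspace rather than with the eigenvectors directly.
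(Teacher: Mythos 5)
Your reduction $\sum_{\text{vertex}}q'=\tr(Q'_N\Pi_v)$ and the uniform bound $\norm{(Q'_N-\overline{Q'_N})\Pi_v}_1\le 2\norm{V'_N}_1\le M$ (via $|x-\bar x|\le 2|x^2-x|$, $W'_N\Pi_v=0$, pinching, and $N\rho\to0$) are correct and are a genuinely different framing from the paper, which instead runs a direct two-way matching of approximate eigenvectors between $Q$ and $Q'_N$ outside a spectral window $U_\epsilon=(-\epsilon,\epsilon)\cup(1-\epsilon,1+\epsilon)$, using Lemma~\ref{lemma-approx} in both directions and the localization estimate $\norm{z}<\delta$ for eigenvectors of $Q$.

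However, your central step --- ``the compression of $(Q'_N,V'_N)$ to $\operatorname{Ran}\Pi_v$ converges, eigenvalue by eigenvalue, to the compression of $(Q,V)$'' together with ``passing to the limit under trace-norm domination'' --- is asserted, not proved, and the domination you have is not strong enough to justify the passage. The trace is a sum over infinitely many (in the limit) eigenvalues accumulating at $0$ and $1$; to exchange limit and sum you need a tail bound \emph{uniform in $N$}, i.e.\ $\sum_{q'\in U_\eta,\,v'\neq0}|q'-\overline{q'}|\le c(\eta)$ with $c(\eta)\to0$ as $\eta\to0$ independently of $N$. Your bound $\norm{(Q'_N-\overline{Q'_N})\Pi_v}_1\le M$ only controls the \emph{whole} sum by a constant; combined with $\operatorname{rank}\Pi_v=O(1/\rho_N)$ and $|q'-\overline{q'}|\le2|v'|\le 2\eta$ on $U_\eta$ it gives a tail of order $\eta/\rho_N$, which diverges as $N\to\infty$ for fixed $\eta$. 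The paper supplies exactly the missing ingredient with Lemma~\ref{lemma-est} and \refeq{eq-EDPistraceclass}: an EDP operator has only $O(\ln^2(1/\eta))$ eigenvalues outside $(-\eta,\eta)$, whence $\sum_{|a|<\eta}|a|\to0$; you would need to transfer this counting from $V_N$ to $V'_N$ uniformly in $N$, which is nontrivial because (as you note yourself) $V'_N$ is only operator-norm close to the local $V_N$, so the claimed Combes--Thomas decay of its eigenvectors does not follow. Your ``Main obstacle'' paragraph correctly identifies this reshuffling problem near $\{0,1\}$ but describes what must be done rather than doing it; as written the argument does not close.
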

The proof is in appendix \ref{sec-finiteproof}.

\begin{figure*}
    \begin{minipage}{32mm}(a)\\
        \includegraphics[width=32mm]{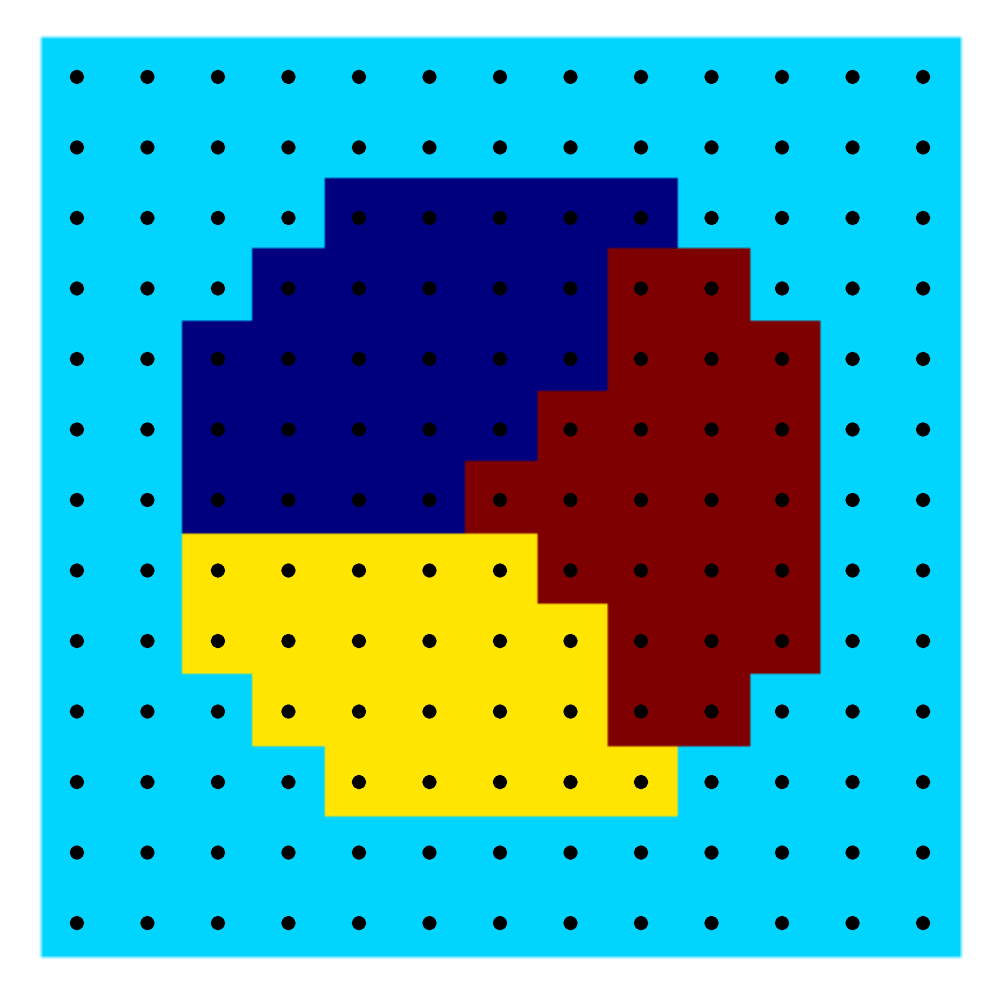}
    \end{minipage}
    \begin{minipage}{50mm}(b)\\
	    \includegraphics[width=50mm]{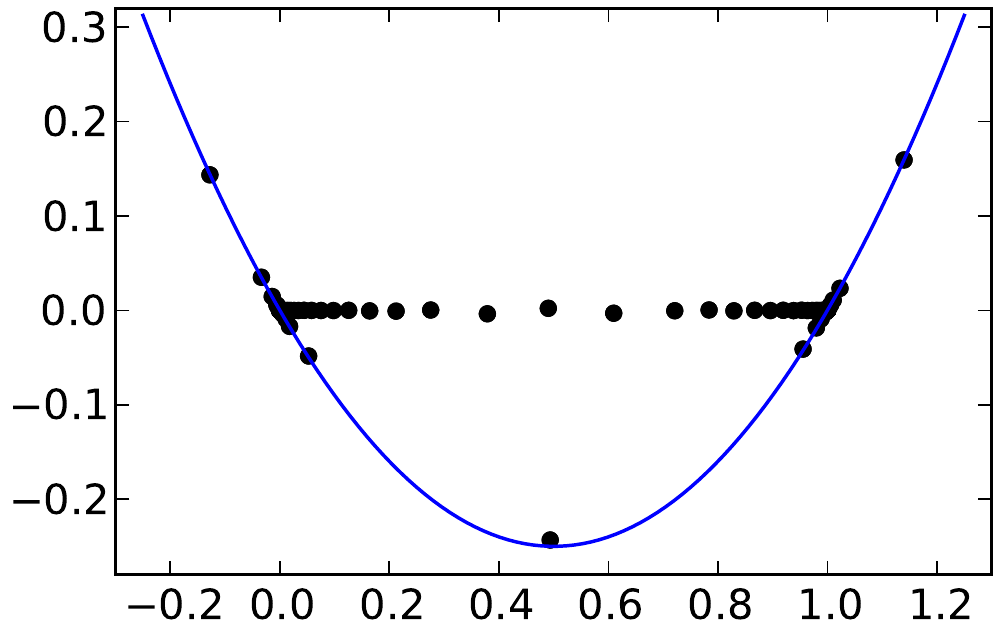}
	\end{minipage}
	\begin{minipage}{45mm}(c)\\
	    \includegraphics[width=45mm]{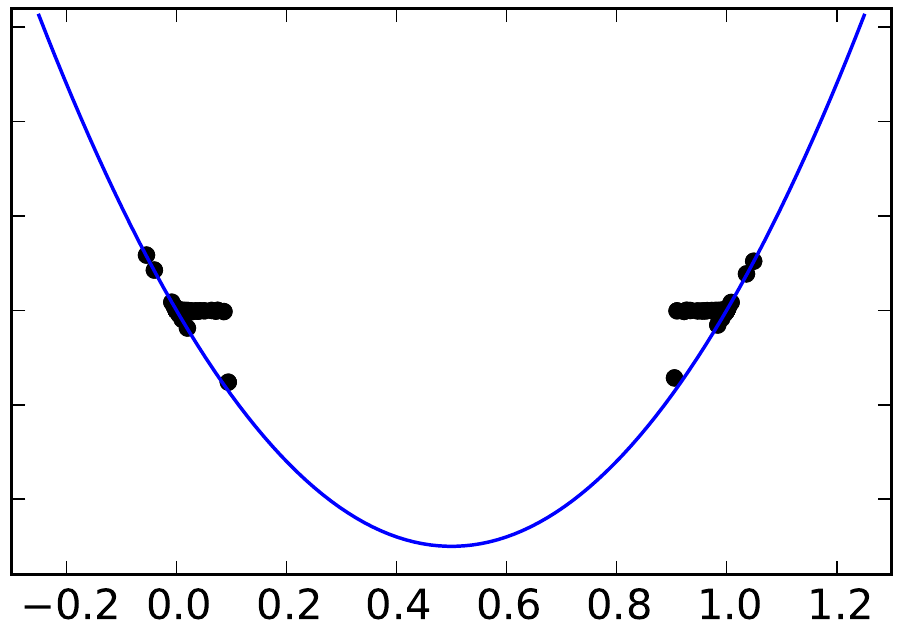}
	\end{minipage}
	\begin{minipage}{45mm}(d)\\
	    \includegraphics[width=45mm]{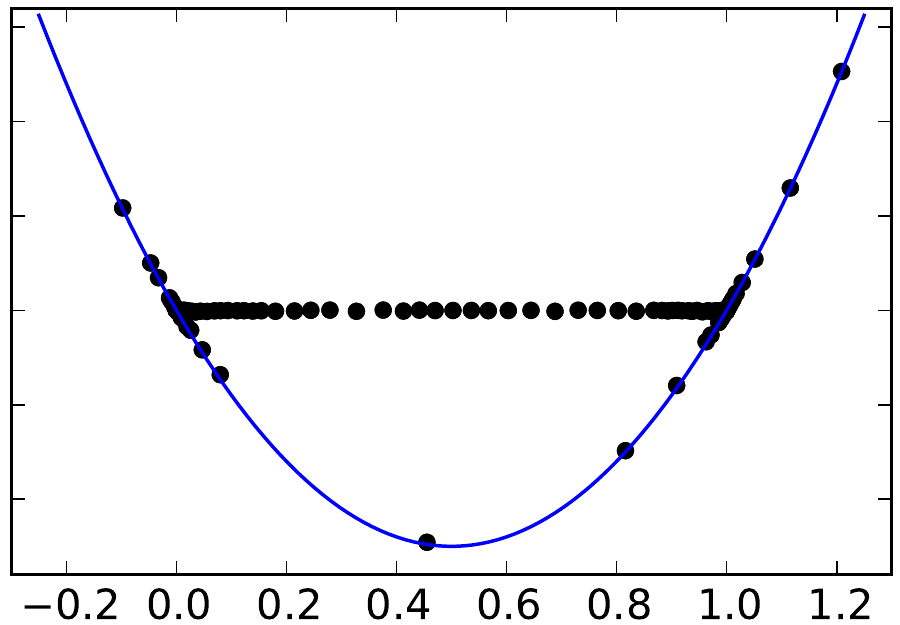}
	\end{minipage}
    \caption{Numerical results.
    (a) The geometry for computing the topological invariants.  The regions A, B, and C are represented by colours yellow, red, and navy blue respectively.  (Region D is represented by cyan.)
    (b)--(c) Numerical results for the Hamiltonian \refeq{eq-BHZlike} with (b) $m=1.6$ and (c) $m=2.4$.  Shown in the plot are eigenvalues of $V'_N$ vs.\ $Q'_N$.
    These results are generic; dots along the $x$-axis represent boundary states; dots near $(0,0)$ and $(1,0)$ are bulk states; dots on the parabola represent vertex states.
    (d) Numerical result from the Hamiltonian \refeq{eq-H3valley}, which strongly breaks particle-hole symmetry.
    }\label{fig:numerics}
\end{figure*}

\section{Numerical results}
\label{sec-numerical}
For our numerical results we use the Bernevig-Hughes-Zhang (BHZ) model~\cite{BHZ} on a square lattice with Rashba coupling and scalar/valley disorder:
\begin{align}\begin{split}
    H &= \int_{\mathbf{k}} (H_0 + H_R) ~d^2\mathbf{k}+ \sum_{\mathbf{r}} H_D(\mathbf{r}) ,
\\  H_0(\mathbf{k}) &= v(\tau^x\sigma^z\sin k_x+\tau^y\sin k_y) \\ &\quad+ (m-t\cos k_x-t\cos k_y)\tau^z ,
\\  H_R(\mathbf{k}) &= r(\sigma^x\sin k_y-\sigma^y\sin k_x) ,
\\  H_D(\mathbf{r}) &= V(\mathbf{r},+) \frac{1+\tau^z}{2} + V(\mathbf{r},-) \frac{1-\tau^z}{2} .
    \label{eq-BHZlike}
\end{split}\end{align}
Here there are four degrees of freedom per site, with $\tau$ and $\sigma$ acting on valley and spin space respectively.

In Figs.~\ref{fig:numerics}(b) and~\ref{fig:numerics}(c), we show the computation of the topological invariant of this model for $v = t = 1$, $r=1/2$, with the disorder $V(\mathbf{r},\pm)$ sampled uniformly from the interval $[-0.4,0.4]$.
Fig.~\ref{fig:numerics}(b) shows a topological phase with $m = 1.6$, while Fig.~\ref{fig:numerics}(c) shows a trivial phase with $m = 2.4$. 

These plots show the eigenvalues $(q',v')$ of the matrices $Q'_N$ and $V'_N$ respectively ($q'$ is along the $x$-axis and $v'$ is along $y$-axis).
Recall that we have $(q')^2-q'=v'+w'$ and $v'w'\approx0$, hence points $(q',v')$ either lives on the parabola $y=x^2-x$ (if $v'\neq 0$) or along the $x$-axis (if $v'=0$).
According to our analysis, points along the $x$-axis represent boundary states; points near $(0,0)$ and $(1,0)$ represent bulk states; all other points on the parabola corresponds to vertex states.
As a comparison, Fig.~\ref{fig:numerics}(c) shows the trivial region where there are (mostly) only have bulk states.
The goal of the numerical procedure outlined in the previous section is to isolate the vertex states which lives near the intersection of A, B, and C.

As the model \refeq{eq-BHZlike} (without disorder) is particle-hole symmetric, the resulting eigenvalues are reflection symmetric ($q' \to 1-q'$).  Disorder only breaks this symmetry very weakly.
To break this mirror symmetry, we construct a spinful model with three valleys,
\begin{align}\begin{split}
    H_0 &= 0.3\lambda^3 + 0.4\lambda^8
\\  &\quad + (0.5\lambda^1+0.6\lambda^4)\sigma^z\sin k_x
    + (1.2\lambda^2-0.3\lambda^5)\sin k_y
\\  &\quad + (0.5\lambda^1-0.7\lambda^3-0.3\lambda^4+1.1\lambda^6-0.6\lambda^8)\cos k_x
\\  &\quad + (0.5\lambda^1-0.7\lambda^3-0.4\lambda^4+1.0\lambda^6-0.6\lambda^8)\cos k_y,
    \label{eq-H3valley}
\end{split}\end{align}
where $\lambda^1,\dots,\lambda^8$ are the Gell-Mann matrices acting on valley space.
We retain the Rashba term with $r=0.1$, and disorder (for the three valleys independently) sampled from $[-0.4,0.4]$.  The spectrum $(q',v')$ is shown in Fig.~\ref{fig:numerics}(d), with $\nu$ evaluating to 1.02 indicating a QSH phase.



\begin{figure}
	\centering
	\includegraphics[width=0.9\columnwidth]{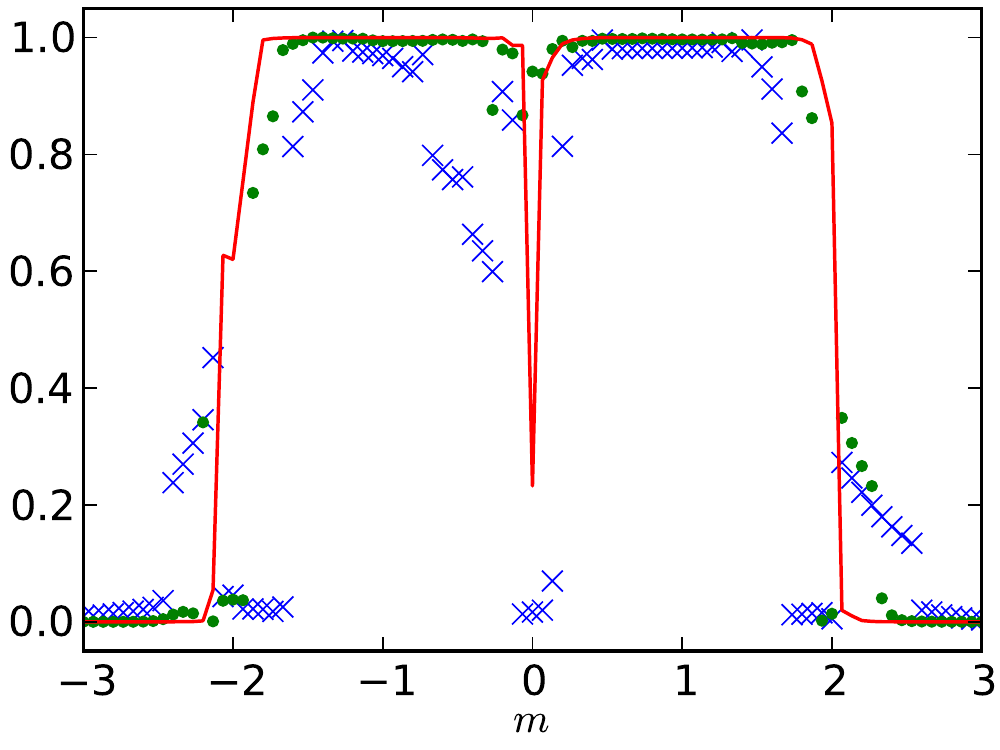}
	\begin{center}\begin{tabular}{cc@{\quad}c}
	& total system size & radius of ABC region \\ blue crosses & $7\times7$ & 2.4 \\ green dots & $13\times13$ & 4.5 \\ red line & $25\times25$ & 8.9
	\end{tabular}\end{center}
	\caption{The sum $\trv{Q}$ (approximating the invariant $\nu$) of a finite system for various $m$.
	The system sizes are shown in the table.  (For example, the green crosses shows data computed with Fig.~\ref{fig:numerics}(a)'s geometry.)
	}\label{pic-nu-m}
\end{figure}

In Fig.~\ref{pic-nu-m}, we plot the result of our formula \refeq{eq-appro} for model \refeq{eq-BHZlike} as a function of $m$.  (The data is computed for a single disorder realization.)
For the computation of $\trv{Q}$, we distinguished the vertex states (from the bulk and edge states) by only considering points satisfying $q' < 0$, $q' > 1$, or $v' < \frac{(q')^2-q'}{2}$.
We see that the system is in the Quantum spin Hall (QSH) phase for the bulk of $-2 \lesssim m \lesssim 2$. 
As the Hamiltonian $H_0 + H_R$ is gapless (Dirac-type) at $m=0$, we expect a small sliver of metallic phase near $m\approx0$ from disorder.
(In general, the metallic phase is stable in the AII class, hence we do not expect direct transition between the QSH and trivial phases.)
We see that as the system size increases (so that it is large compared with the correlation length $\xi$), the invariant approaches 0 or 1 to the gapped phases.

\section{Property and Proof}\label{sec-property}
In this section, we will investigate the property of $\trv(Q)$ and derive our main proposition step by step.
\begin{property}[gauge invariance]
Fixing the position of the flux and working in the geometric definition, then $\trv(Q)$ does not depend on the actual partition of the plane. For example, the following partition and the ordering of A, B, C will give the same $\trv(Q)$.
\end{property}

\begin{figure}[h!]
	\centering
	\includegraphics[width=0.5\columnwidth]{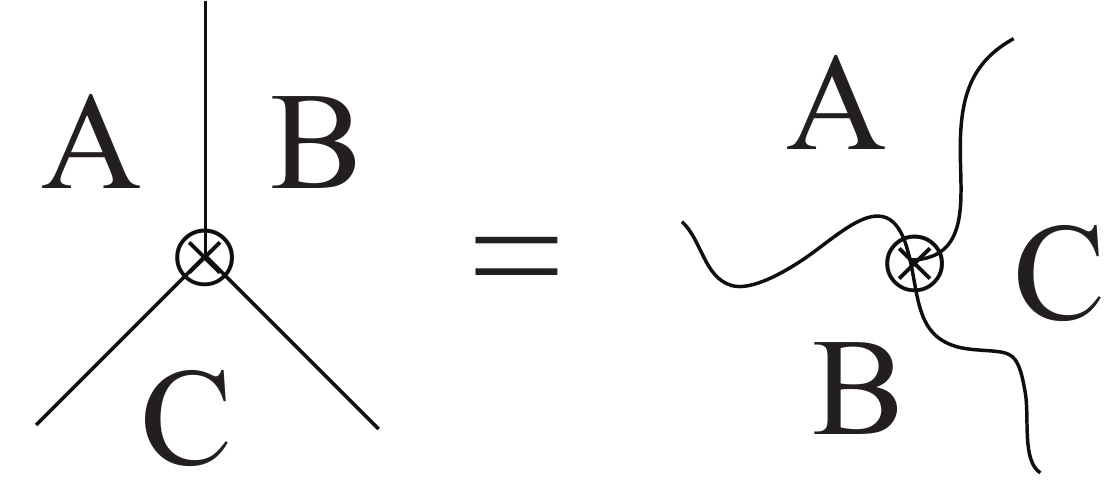}
\end{figure}
This is because different partitions correspond to different gauge choices in the Pierls substitution. Indeed, fix a reference point $\ast$, define $u_\vx=s_{\ast,\vx}/s'_{\ast,\vx}$ where $s,s'$ are the phases for two partitions. Since $s_{\ast,\vx}s_{\vx,\vy}s_{\vy,\ast}=s_{\ast\vx\vy}=s'_{\ast,\vx}s'_{\vx,\vy}s'_{\vy,\ast}$, we have:
\begin{equation}
s'_{\vx,\vy}=u_\vx s'_{\vx,\vy}\bar{u}_\vy
\end{equation}
Thus $Q'=UQU^\dagger$ and they have the same spectra.

\begin{property}\label{prop-topandgeo}
$\trv(Q^t)$ defined from topological $Q$ (for good partitions) and $\trv(Q^g)$ defined from geometric $Q$ are equal (mod 2).
\end{property} 
\begin{proof}
We calculate $Q^t-Q^g$  and find that
\begin{equation}
	(Q^t-Q^g)_{\vx,\vy}=\begin{cases}
		-2P_{\vx,\vy}, & \vx,\vy\text{~belong to different regions} \\
		&\text{and~}(\vx,\vy)\text{~intersects 2 boundaries} \\
		0, & \text{otherwise}
	\end{cases}.
\end{equation}
From geometry we can see if $|\vx|>r$, then the first condition is satisfied only if $|\vy-\vx|>\mathcal{O}(r)$ where $r=\max\{|\vx|,|\vy|\}$. So $Q^t-Q^g$ satisfies a EDP: $|(Q^t-Q^g)_{\vx,\vy}|<C_1e^{-C_2 r}$ and thus is trace class. Therefore,
\newcommand{\Qtb}{\overline{Q^t}}
\newcommand{\Qgb}{\overline{Q^g}}
\begin{itemize}
	\item $	\tr(Q^t-Q^g)=\lim_{N\to\infty}\tr(Q^t_N-Q^g_N)=0$, since they always have the same diagonal elements (note that we need trace class condition for the first equation evolving limit \cite{Lax}). 
	\item $\Qtb-\Qgb=(Q^t-\Qtb)-(Q^g-\Qgb)-(Q^t-Q^g)$ is also trace class.
\end{itemize}
Due to time-reversal symmetry,
\begin{equation}
\begin{aligned}
	&\tr(\Qtb-\Qgb)=\text{Ind}(\Qtb,\Qgb)\\
	=&\dim\Ker(\Qtb-\Qgb-1)-\dim\Ker(\Qtb-\Qgb+1)\\
	=&0 \modtwo,
\end{aligned}
\end{equation} 
where $\text{Ind}(\cdot,\cdot)$ is the index for a pair of projections \cite{Zindex}.
So we have
\begin{dmath}
	\trv(Q^t)-\trv(Q^g)=\tr(Q^t-\bar{Q^t}-Q^g-\bar{Q^g})=\tr(Q^t-Q^g)-\tr(\bar{Q^t}-\bar{Q^g})=0 \modtwo.
\end{dmath}
\end{proof}

Now we insert two 1/2-fluxes at different positions far away from each other. We divide the plane into four regions, as in Fig.~\ref{pic-4regions}. Again, we ``insert half fluxes along the boundaries" and write the Hamiltonian as
\begin{equation}
S_{\vx,\vy}=s_{\vx,\vy}P_{\vx,\vy},
\end{equation}
where $s_{\vx,\vy}$ are defined similar to \refeq{eq-phase} by the new partition. We have:
\begin{dmath}\label{eq-S2-S}
S^2-S=-2s_{\vx,\vy}\sum_{\substack{\vz\\s_{\vx\vy\vz}=-1}}P_{\vx,\vz}P_{\vz,\vy}=-2s_{\vx,\vy}(\sum_{\substack{\vz\\O_1\in(\vx\vy\vz)\\O_2\notin(\vx\vy\vz)}}+\sum_{\substack{\vz\\O_1\notin(\vx\vy\vz)\\O_2\in(\vx\vy\vz)}})P_{\vx,\vz}P_{\vz,\vy}\\
\defeq (V_1)_{\vx,\vy}+(V_2)_{\vx,\vy}.
\end{dmath}
Here $V_i~(i=1,2)$ is the ``vertex"  term for two junctions respectively. As in Sec.~\ref{sec-formula}, $V_i~(i=1,2)$ satisfies EDP for vertex $i$ and $S-\bS$ is trace class, so $\trv(S)$ is well-defined.

\begin{figure}[ht]
	\centering
	\includegraphics[width=0.5\columnwidth]{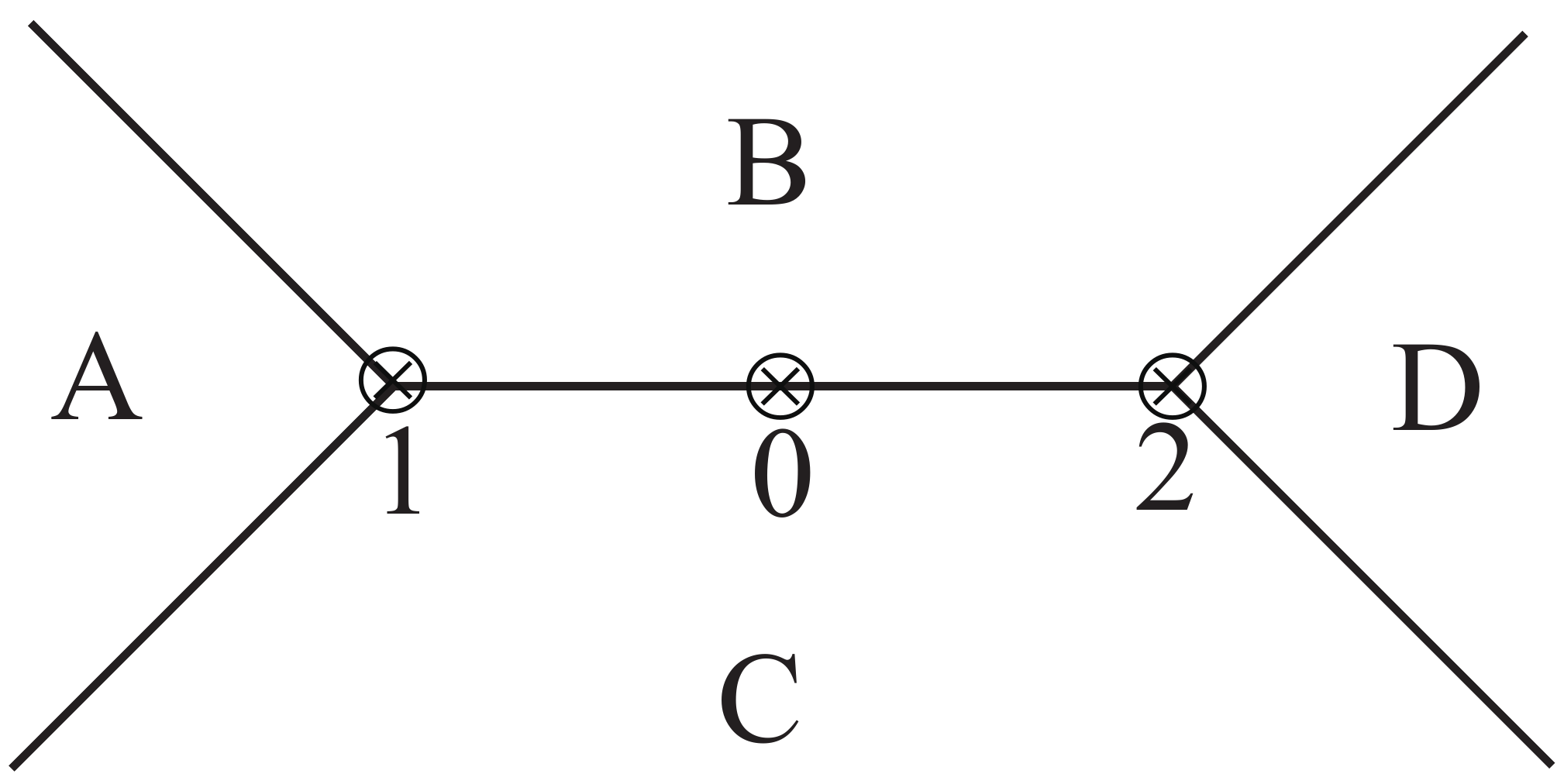}
	\caption{Divide the plane into 4 regions $A+B+C+D$. Insert a 1/2-flux for each vertex (1 and 2). We will show that it is ``equal" to insert a unit flux at the middle (point 0).}\label{pic-4regions}
\end{figure}

As $\text{dist}(1,2)\to\infty$, we have $V_1V_2=V_2V_1\to 0$. In the limiting case where $V_1V_2=V_2V_1=0$ exactly, one can simultaneously diagonalize them and at least one eigenvalue for a common eigenstate should be 0. This means each ``vertex state" of $S$ (those states with $S\neq 0,1$) is located at junction 1 or 2. Moreover, define $Q_1$ as the matrix corresponding to a $1/2$-flux insertion at point 1 with partition $A+B+CD$, $Q_2$ corresponding to the insertion at point 2 with partition $AB+C+D$, then the effect of $S$ for a state near junction $i$ will be close to the effect of $Q_i$, so one anticipates that 
\begin{equation}\label{eq-local1}
\trv(S)\approx\trv(Q_1)+\trv(Q_2).
\end{equation}

In the case where $\text{dist}(1,2)$ is large but not infinity, some vertex states of $S$ may come from the coupling of two vertex states at different fluxes. However, it is not difficult to convince that Eq.(\ref{eq-local1}) still holds. The physics here is similar to that for the two-states systems: due to the weak but nonzero coupling (off diagonal elements), the eigenstates will be approximately
\begin{equation}
\ket{\phi\pm}=\frac{1}{\sqrt{2}}(\ket{1}\pm\ket{2}).
\end{equation}
Here, we cannot say a vertex states of $R$ is located at one of the fluxes. However, the summation of eigenvalues for $\ket{\phi\pm}$ is still equal to that for $\ket{1}$ and $\ket{2}$. 
\begin{property}[additivity]\label{prop-add}
Under technical assumptions, as the distance between two fluxes goes to infinity, the vertex spectrum of $S$ ``decouples":
\begin{equation}\label{eq-add}
\lim_{\text{dist}(1,2)\to\infty}\trv(S)-(\trv(Q_1)+\trv(Q_2))=0.
\end{equation}
\end{property}
The proof is in appendix \ref{sec-addproof}.

\begin{property}\label{prop-S0}
 $\trv(S)=0 \mod 2$. This is an exact equation, no matter whether $\text{dist}(1,2)$ is large or not.
\end{property}
The idea is, if one looks from a large distance, inserting two 1/2-fluxes is approximately equivalent to insert a 1-flux, which is (singularly) gauge equivalent to 0-flux, so that no vertex states appear in the spectrum at all.   
\begin{proof}
We works in the AB gauge, where the vector potential of a flux satisfies 
\begin{equation}\label{eq-ABgauge}
|\mathbf{A}(\mathbf{r})|=\frac{\text{flux}}{2\pi r}, ~~~~\mathbf{A}(\mathbf{r})\perp \mathbf{r}.
\end{equation}
We still use $S$ to denote the Hamiltonian in this gauge. Denote $T$ to be the Hamiltonian for the case of inserting 1-flux at the center of two half-fluxes. $S$ and $T$ should be close outside the center. We will prove that $S-T$ is trace class in the Appendix \ref{sec-sttraceclass}. Then, similar to the proof of Property \ref{prop-topandgeo},
\begin{equation}
\tr(S-T)=\lim_{N\to\infty}\tr(S_N-T_N)=0,
\end{equation} 
since they have the same diagonal elements. $\bS-T=(\bS-S)+(S-T)$ is also trace class, so
\begin{dmath}
	\tr(\bS-T)=\text{Ind}(\bS,T)=\dim\Ker(\bS-T-1)-\dim\Ker(\bS-T+1)=0 \modtwo,
\end{dmath} 
due to time reversal symmetry. Therefore, 
\begin{equation}
\trv(S)=\tr(S-\bS)=\tr(S-T)-\tr(\bS-T)=0 \modtwo.
\end{equation}
\end{proof}

\begin{property}
$\trv(Q)$ is an integer$\modtwo$ independent of the position of the flux.
\end{property}
\begin{proof}
	For every pair of points 1 and 2, we have 
	\begin{equation}
	\trv(Q_1)-\trv(Q_2)=[\trv(Q_1)+\trv(Q_3)]-[\trv(Q_2)+\trv(Q_3)].
	\end{equation}
	Due to Properties~\ref{prop-add} and \ref{prop-S0}, it can be arbitrarily close to 0 (mod 2) as $\text{dist}(1,3)$ and $\text{dist}(2,3)$ goes to infinity. So we must have
	\begin{equation}
	\trv(Q_1)=\trv(Q_2) \modtwo.
	\end{equation}
Plug this into Eq.(\ref{eq-add}) and use again property \ref{prop-S0}, we obtain that  $\trv(Q_1)=\trv(Q_2)\in\ztwo$. 
\end{proof}

This property already shows that $\trv(Q)$ is an $\ztwo$ invariant for topological insulators which only depends on the state itself. The only natural identification is the Kane-Mele invariant.
\begin{property}
$\trv(Q)$ is equal to the Kane-Mele invariant.
\end{property}
\begin{proof}
Denote our invariant as $\nu$. For two gapped time-reversal-symmetric systems $A$ and $B$, we stack them (without hopping/interaction) and denote the new system $A\oplus B$. Obviously $\nu(A\oplus B)=\nu(A)+\nu(B)$. From the classification of topological insulator \cite{AZ,Shinsei,KitaevK} for AII class, the Kane-Mele invariant $\nu_{\text{KM}}$ is the only invariant with this property. It follows that
\begin{equation}
\nu=k\nu_{\text{KM}},
\end{equation}
where $k=0$ or $k=1$.

To prove $k=1$, it is enough to verify the existence of one system with $\nu=1$. To this end, consider a translational invariant system in the DIII class with nontrivial $\mathbb{Z}_2$ invariant. In this case, according to \refeq{eq-d3}, we have
\begin{equation}
\nu=\text{No.~}\{\text{Kramers pairs at }\frac{1}{2}\}=1\modtwo.
\end{equation}
The last equation can be obtained by considering the band structure, due to translational invariance: A Kramers pair at 1/2 correspond to the a band crossing.
\end{proof}

\section{Conclusion}
In this paper, we propose a formula \refeq{eq-ourformula} for the $\ztwo$ invariant for topological insulators in 2D, which remains valid with disorder. The intuition behind our formula is flux-insertion-induced spectral flow, which manifest itself as the difference of numbers of electrons in the ground states. The formula works by taking the single-body projection matrix $P$ (or ground state correlation function) as the input, performing a Peierls substitution (either geometrical or topological), and then take the ``trace over vertex states". Our formula is a local expression, in the sense that the contribution mainly comes from quantities near an arbitrarily but fixed point. The validity of this formula is proved indirectly, by showing its properties (gauge invariance, addictivity, integrality, etc). All properties are physically explained and mathematically proved.

Due to the local property of our formula, it can be well approximated with partial knowledge of the projection matrix. In this case, we construct ``vertex matrix" and ``boundary matrix" which almost commute. Using an interesting parabola, the vertex contributions are separated out. The validity of this algorithm is proved and verified numerically.

Similar ideas may be used in the case of other symmetries and other dimensions. For example, for class A in 2D (Chern insulator), an infinitesimal flux insertion will reproduce \refeq{eq-Kitaev}. It would be interesting to work out other cases to see if one can get (and prove) a new formula. last but not least, the idea may also be extended to some interacting cases. It would be interesting to explore such generalizations.

\section{Acknowledgement}
This work is supported from NSF Grant No. DMR-1848336. ZL is grateful to the PQI fellowship at University of Pittsburgh.


\appendix
\section{More technicalities in the proof}\label{sec-app1}
\subsection{Proof of additivity}\label{sec-addproof}
In this appendix, we will prove Property\ref{prop-add} in Sec.~\ref{sec-property}. The proof is a little bit technical, but the physics idea is simple: vertex states of $S$ comes from those of $Q_1$ and $Q_2$.
\begin{lemma}[Almost orthogonal vectors]\label{lemma-combination}
For $N$ unit vectors $u_n$ in a $d$-dimensional linear space s.t. $|(u_i,u_j)|<\sigma$ for each $i\neq j$. If $\sigma<\frac{1}{\sqrt{2d}}$, then $N<2d-1$. 
\end{lemma}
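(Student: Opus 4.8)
The plan is to bound $N$ by a dimension-counting argument on the Gram matrix of the $u_n$'s. Let $G$ be the $N\times N$ Gram matrix with entries $G_{ij}=(u_i,u_j)$; by hypothesis $G_{ii}=1$ and $|G_{ij}|<\sigma$ for $i\neq j$. Since the $u_n$ live in a $d$-dimensional space, $\operatorname{rank}(G)\le d$, so $G$ has an eigenvalue $0$ with multiplicity at least $N-d$. The idea is to show that if $N$ were too large, the off-diagonal ``noise'' would be too small to push any eigenvalue all the way down to $0$, contradicting this rank bound.

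\smallskip
\noindent The key steps, in order:
\begin{enumerate}
\item Write $G = I + E$, where $E$ is Hermitian with zero diagonal and $|E_{ij}|<\sigma$. Estimate the operator norm $\norm{E}$ from above. A convenient bound is $\norm{E}^2 \le \norm{E}_F^2 = \sum_{i\neq j}|E_{ij}|^2 < N(N-1)\sigma^2$, hence $\norm{E} < \sigma\sqrt{N(N-1)} < \sigma N$. (One could instead use the Gershgorin/$\ell^1$-row-sum bound $\norm{E} \le (N-1)\sigma$; the Frobenius bound is what matches the stated constant best, so I would use that.)
\item Since $G$ has rank at most $d < N$, it is singular, so $0 \in \sigma(G)$. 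But every eigenvalue of $G = I+E$ lies in $[1-\norm{E},\,1+\norm{E}]$, so singularity forces $\norm{E}\ge 1$.
\item Combine: $1 \le \norm{E} < \sigma\sqrt{N(N-1)}$. With the hypothesis $\sigma < \tfrac{1}{\sqrt{2d}}$ this gives $1 < \tfrac{1}{\sqrt{2d}}\sqrt{N(N-1)}$, i.e. $N(N-1) > 2d$. Taking the contrapositive: if $N(N-1)\le 2d$ then $G$ cannot be singular, which is impossible when $N>d$; so in fact $N \le d$ in that regime, and more crudely $N(N-1)\le 2d$ already forces $N \le \sqrt{2d}+1 \le 2d-1$ for $d\ge 1$. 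A cleaner packaging: assume for contradiction $N \ge 2d-1 \ge$ (something making $N(N-1)\ge 2d$, e.g. $N\ge 2$ with $d\ge1$ gives $N(N-1)\ge 2(2d-2)\ge 2d$ once $d\ge 2$); then $\norm{E}<1$ is contradicted, so $N<2d-1$.
\end{enumerate}

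\smallskip
\noindent The main obstacle I anticipate is purely bookkeeping: matching the exact threshold $\sigma<\tfrac{1}{\sqrt{2d}}$ and the exact conclusion $N<2d-1$ with whichever norm estimate I pick in Step 1, and handling small-$d$ edge cases ($d=1$: the hypothesis forces $\sigma<1/\sqrt2$, and $N$ unit vectors in a line with pairwise inner products $\pm1$ of modulus $<1/\sqrt2$ is impossible for $N\ge2$, so $N\le1=2d-1$ is vacuously consistent with $N<2d-1$ failing — so one should double check whether the intended claim is $N\le 2d-1$ or $N<2d-1$, and which norm bound delivers it). There is no deep difficulty: the whole content is ``small off-diagonal perturbation of the identity stays invertible, but a Gram matrix of too many vectors in low dimension must be singular.'' I would present the Frobenius-norm version as the clean line and remark that cruder row-sum bounds also suffice.
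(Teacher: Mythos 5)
There is a genuine gap in Step 3, and it is not just bookkeeping: the inequality you derive points the wrong way. Your argument establishes the implication ``$N>d$ $\Rightarrow$ $G$ singular $\Rightarrow$ $\norm{E}\ge 1$ $\Rightarrow$ $N(N-1)>2d$.'' That is a \emph{lower} bound on $N$, and it is automatically satisfied by every $N\ge 2d-1$ (indeed $(2d-1)(2d-2)=2(2d-1)(d-1)\ge 2d$ for $d\ge2$), so assuming $N\ge 2d-1$ produces no contradiction whatsoever. Concretely, for $d=10$ and $N=19$ your chain of inequalities reads $1\le\norm{E}<\sigma\sqrt{342}<\sqrt{342/20}\approx 4.1$, which is perfectly consistent. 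The only regime your argument excludes is $d<N\lesssim\sqrt{2d}$, which is empty for $d\ge 2$. The root cause is that ``$G$ is singular, so $\norm{E}\ge1$'' uses only the existence of a \emph{single} zero eigenvalue, whereas the content of the lemma comes from the fact that $G$ has $N-d$ zero eigenvalues while its trace is pinned at $N$: the surviving $d$ eigenvalues must then be large, and \emph{that} is what collides with the small off-diagonal entries.

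The repair is a second-moment (Cauchy--Schwarz on the spectrum) argument rather than an operator-norm one, and it is what the paper does. With $\lambda_1,\dots,\lambda_N$ the eigenvalues of $G$, at most $d$ are nonzero, so
\begin{equation*}
\sum_i\lambda_i^2\;\ge\;\frac{\bigl(\sum_i\lambda_i\bigr)^2}{d}\;=\;\frac{(\tr G)^2}{d}\;=\;\frac{N^2}{d},
\end{equation*}
while on the other hand $\sum_i\lambda_i^2=\tr(G^2)=\sum_{i,j}|(u_i,u_j)|^2<N+N(N-1)\sigma^2<N+\tfrac{N(N-1)}{2d}$. Comparing the two and multiplying by $2d/N$ gives $2N<2d+N-1$, i.e.\ $N<2d-1$ exactly. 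Note that your Frobenius computation in Step 1 is essentially the right-hand side of this comparison; what is missing is replacing ``$0\in\sigma(G)$ forces $\norm{E}\ge1$'' by the rank-constrained Cauchy--Schwarz lower bound on $\tr(G^2)$.
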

\begin{proof}
Let $A=(A_{i,j})=((u_i,u_j))$ to be the Gram matrix of $\{u_i\}$. Denote $\lambda_1,\cdots,\lambda_{N}$ the eigenvalues of $A$. Since $u_i\in\mathbb{C}^d$, $\text{rank}(A)\leq d$, at most $d$ of them are nonzero. By Cauchy inequality, $\sum \lambda_i^2\geq(\sum\lambda_i)^2/d=(\tr{A})^2/d=N^2/d$. On the other hand, $\sum \lambda_i^2=\tr(A^2)=\sum|(u_i,u_j)|^2<N+N(N-1)\sigma^2$. Therefore,
\begin{equation}\label{eq-combination}
\frac{N^2}{d}<N+N(N-1)\sigma^2<N+N(N-1)\frac{1}{2d}.
\end{equation}
Solving this inequality, we find $N<2d-1$.
\end{proof}
\begin{lemma}[an estimation of the eigenvalue distribution]\label{lemma-est}
	Assume a Hermitian matrix $A$ satisfying the exponential decay property (EDP) $|A_{\vi,\vj}|<C_1 p(t)e^{-C_2t}$ where $t=\max\{|\vi|,|\vj|\}$, $p(t)$ is a monic (leading coefficient=1) polynomial. The number of eigenvalues outside $(-\epsilon,\epsilon)$ is bounded by $\mathcal{O}(\frac{1}{C_2^2}\ln^2{\frac{C_1}{C_2^\alpha\epsilon}})$, where $\alpha=2+\deg p$.
\end{lemma}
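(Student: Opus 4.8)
The plan is to make precise the slogan ``exponential decay $\Rightarrow$ approximately finite rank''. Fix a radius $R$, to be optimized at the end, and let $A_R$ be the truncation of $A$ to the sites inside the disk of radius $R$, i.e.\ $(A_R)_{\vi\vj}=A_{\vi\vj}$ when $|\vi|,|\vj|\le R$ and $0$ otherwise; write $A=A_R+E$. Since we are on a two-dimensional lattice, $\operatorname{rank}(A_R)\le\#\{\vi:|\vi|\le R\}=\mathcal{O}(R^2)$. The first ingredient is the elementary spectral fact: if a Hermitian matrix decomposes as $B+E$ with $\operatorname{rank}(B)\le k$ and $\norm{E}<\epsilon$, then $B+E$ has at most $2k$ eigenvalues outside $(-\epsilon,\epsilon)$. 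This follows from the min--max principle: the span $W$ of eigenvectors of $B+E$ with eigenvalue $\ge\epsilon$ must intersect $\Ker B$ trivially, because on $\Ker B$ the quadratic form of $B+E$ coincides with that of $E$ and is therefore strictly below $\epsilon\norm{v}^2$; hence $\dim W\le\operatorname{codim}\Ker B=\operatorname{rank}B\le k$, and the same argument applied to $-(B+E)$ controls the eigenvalues $\le-\epsilon$.

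The second ingredient is a bound on $\norm{E}$. Since $E$ is Hermitian and supported on pairs with $\max\{|\vi|,|\vj|\}>R$, the Schur test reduces the task to estimating the maximal absolute row sum $\max_\vi\sum_\vj|E_{\vi\vj}|$. Using the EDP and splitting the sum over $\vj$ according to whether $|\vj|\le|\vi|$ (at most $\mathcal{O}(|\vi|^2)$ terms, each at most $C_1p(|\vi|)e^{-C_2|\vi|}$) or $|\vj|>|\vi|$ (a tail sum $\lesssim C_1\sum_{k>|\vi|}k\,p(k)e^{-C_2k}$), one obtains a bound of the shape $C_1\,q(|\vi|)\,e^{-C_2|\vi|}$ with $\deg q=\deg p+2$. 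Pulling out a factor $e^{-C_2R/2}$ and using $\sup_{s\ge0}s^ne^{-C_2s/2}=\mathcal{O}(C_2^{-n})$ with $n=\alpha=2+\deg p$ then gives $\norm{E}\lesssim C_1C_2^{-\alpha}e^{-C_2R/2}$; the case $|\vi|\le R$, where the row sum is itself only a tail over $|\vj|>R$, is handled by the same absorption.

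Finally, I would choose $R$ of order $C_2^{-1}\ln\frac{C_1}{C_2^\alpha\epsilon}$, so that $\norm{E}<\epsilon$; then the number of eigenvalues of $A$ outside $(-\epsilon,\epsilon)$ is at most $2\operatorname{rank}(A_R)=\mathcal{O}(R^2)=\mathcal{O}\!\big(\tfrac{1}{C_2^2}\ln^2\tfrac{C_1}{C_2^\alpha\epsilon}\big)$, which is the claim. (The same dyadic decomposition over shells $|\lambda|\in[2^{-n-1},2^{-n}]$ then yields the trace-class corollary referenced after the lemma.)

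I expect no conceptual obstacle; the only care needed is the bookkeeping in the second ingredient --- one must see that the two-dimensionality of the lattice supplies the factor $C_2^{-2}$ while absorbing the monic polynomial $p$ into the exponential supplies $C_2^{-\deg p}$, so that the two combine into precisely the $C_2^{-\alpha}$ appearing inside the logarithm. A secondary point requiring a little attention is keeping the min--max argument strictly two-sided (handling eigenvalues exactly at $\pm\epsilon$), which is why it is cleanest to demand $\norm{E}$ strictly below $\epsilon$ when selecting $R$.
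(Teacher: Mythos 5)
Your proof is correct, but it takes a genuinely different route from the paper's. The paper argues eigenvector by eigenvector: for each normalized eigenvector with eigenvalue $|a|>\epsilon$ it uses the EDP to show the eigenvector's weight outside a disk of radius $r$ is at most $\delta\sim C_1 r^{3/2}p(r)e^{-C_2 r}/\epsilon$, so the restrictions of an orthonormal eigenbasis to the disk form a family of nearly orthogonal unit vectors in a space of dimension $d=\mathcal{O}(r^2)$; a separate Gram-matrix lemma (Lemma~\ref{lemma-combination}) then caps the size of such a family at $2d-1$. You instead write $A=A_R+E$ with $A_R$ of rank $\mathcal{O}(R^2)$ and $\norm{E}<\epsilon$ (via the Schur test), and invoke the standard min--max fact that a Hermitian perturbation of norm $<\epsilon$ of a rank-$k$ operator has at most $2k$ eigenvalues outside $(-\epsilon,\epsilon)$; this is essentially the bound $s_{n}(A)\le\norm{A-A_R}$ for $n>\operatorname{rank}A_R$. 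Both routes land on the same choice $R\sim C_2^{-1}\ln\frac{C_1}{C_2^\alpha\epsilon}$ and the same $\mathcal{O}(R^2)$ count, and your bookkeeping of the $C_2^{-\alpha}$ factor (two powers from the 2D shell count, $\deg p$ from absorbing the polynomial) is sound. Your version is shorter and avoids Lemma~\ref{lemma-combination} entirely; the paper's version has the advantage that the intermediate localization statement $\norm{z}<\delta$ for individual eigenvectors is reused verbatim in the proofs of additivity and of the finite-size approximation (and Lemma~\ref{lemma-combination} is itself invoked again around \refeq{eq-121}), so the extra machinery is not wasted there. The only point to watch in your argument is the one you already flagged: the strict inequality $\norm{E}<\epsilon$ is what makes $W\cap\Ker A_R=\{0\}$, and in the infinite-dimensional setting one should phrase $W$ as the range of the spectral projection onto $[\epsilon,\infty)$, which your argument handles without change.
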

\begin{proof}
For an unit eigenvector $x$: $Ax=ax$, $|a|>\epsilon$, we separate $x$ into two parts $x=y\oplus z$ according to whether the label is inside or outside a circle: $y_\vi=0$ if $|\vi|>r$, $z_\vi=0$ if $|\vi|<r$. The radius $r$ will be determined later (depend on $\epsilon$).

We claim that $\norm{z}<\delta\defeq\frac{C_1r^{3/2}p(r)e^{-C_2r}}{\epsilon}$. Indeed, $ay\oplus az=Ax$, $||x||=1$. According to the Cauchy inequality we have
\begin{equation}\label{eq-a1}
\norm{\epsilon z}^2<\norm{az}^2<\sum_{|\vi|>r,\vj}{|A_{\vi,\vj}|^2}\lesssim[C_1r^{\frac{3}{2}}p(r)e^{-C_2r}]^2.
\end{equation}
Here, $\lesssim$ (means inequality up to constant) can be verified by doing integral. 
Denote the number of eigenvalues larger than $\epsilon$ to be $N$: $Ax_n=a_nx_n$, $n=1,\cdots,N$. Without loss of generality, we can assume they are orthogonal, so
\begin{equation}
(x_i,x_j)=0\Rightarrow |(y_i,y_j)|=|(z_i,z_j)|<\delta^2.
\end{equation}
Now we have $N$ unit vectors $u_n=\frac{y_n}{\sqrt{1-z_n^2}}$ in dimension $d\propto r^2$ whose inner products with each other are less than $\sigma\defeq\frac{\delta^2}{1-\delta^2}$. We choose $r$ large enough so that 
\begin{equation}\label{eq-cond}
\sigma<\frac{1}{\sqrt{2d}}\sim\frac{1}{r}.
\end{equation}
According to Lemma \ref{lemma-combination}, $N<2d=\mathcal{O}(r^2)$.
We can solve \refeq{eq-cond} to estimate $r$ (thus $N$). Roughly, set $\sigma\sim(\frac{C_1r^{3/2}p(r)e^{-C_2r}}{\epsilon})^2=\frac{1}{r}$, let $x=C_2r$, we find $e^x=\frac{C_1}{C_2^\alpha\epsilon}x^\alpha$, where $\alpha=2+\deg p$. The exact solution can be expressed using the Lambert $W$ function \cite{LambertW}. Here we only need the asymptotic expansion. Denote $\beta=\frac{C_1}{C_2^\alpha\epsilon}$, take logarithm, we have the following iteration:	
\begin{dmath}
	x=\ln\beta+\alpha\ln x=\ln\beta+\alpha\ln(\ln\beta+\alpha\ln x)=\cdots=\mathcal{O}(\ln\beta).
\end{dmath}
Thus, it is enough to choose $r=\mathcal{O}(\frac{1}{C_2}\ln{\frac{C_1}{C_2^\alpha\epsilon}})$, and $N<2d=\mathcal{O}(r^2)=\mathcal{O}(\frac{1}{C_2^2}\ln^2{\frac{C_1}{C_2^\alpha\epsilon}})$.
\end{proof}

As a corollary, it follows that
\begin{dmath}\label{eq-EDPistraceclass}
\sum_{|a|<\epsilon}|a|=\sum_{|a|<\epsilon}\int_{0}^{\epsilon}\theta(|a|-x)dx=\int_{0}^{\epsilon}\sum_{|a|<\epsilon}\theta(|a|-x)dx\\<\int_{0}^{\epsilon} \frac{1}{C_2^2}\ln^2{\frac{C_1}{C_2^\alpha x}}dx,
\end{dmath}
which converges to 0 as $\epsilon\to 0$.  Therefore any EDP operator is trace class.

\begin{lemma}\label{lemma-approx}
Assume $A$ is Hermitian. If $\exists x\neq 0$ s.t. $\norm{(A-\lambda)x}<\norm{\epsilon x}$, then $A$ has an eigenvalue $a\in(\lambda-2\epsilon,\lambda+2\epsilon)$. Moreover, decompose $x=x^{\parallel}+x^{\perp}$ with respect to subspace $(\lambda-2\epsilon,\lambda+2\epsilon)$, then $\norm{x^\perp}<\frac{1}{2}$. If $A$ is of finite size $N\times N$, then an eigenvector $x_a$ of $A$ with eigenvalue $a\in(\lambda-2\epsilon,\lambda+2\epsilon)$ satisfies $|(x,y)|>\sqrt{\frac{3}{4N}}$.
\end{lemma}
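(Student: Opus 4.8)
The plan is to reduce all three assertions to the spectral theorem for the Hermitian matrix $A$. First I would fix the spectral decomposition $A=\sum_a a\,\Pi_a$, with $\Pi_a$ the orthogonal projector onto the $a$-eigenspace, and split the identity as $\mathbf{1}=\Pi^{\parallel}+\Pi^{\perp}$, where $\Pi^{\parallel}$ projects onto the span of the eigenvectors whose eigenvalue lies in $(\lambda-2\epsilon,\lambda+2\epsilon)$ and $\Pi^{\perp}$ onto the orthogonal complement; then $x^{\parallel}=\Pi^{\parallel}x$ and $x^{\perp}=\Pi^{\perp}x$. The elementary fact driving everything is that $A-\lambda$ commutes with both $\Pi^{\parallel}$ and $\Pi^{\perp}$, so $(A-\lambda)x^{\parallel}$ and $(A-\lambda)x^{\perp}$ are orthogonal and $\norm{(A-\lambda)x}^2=\norm{(A-\lambda)x^{\parallel}}^2+\norm{(A-\lambda)x^{\perp}}^2$.

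For the existence of an eigenvalue I would argue by contradiction: if $A$ had no eigenvalue in $(\lambda-2\epsilon,\lambda+2\epsilon)$ then $|a-\lambda|\ge 2\epsilon$ for every eigenvalue $a$, whence $\norm{(A-\lambda)x}^2=\sum_a|a-\lambda|^2\norm{\Pi_a x}^2\ge 4\epsilon^2\norm{x}^2$, contradicting the hypothesis $\norm{(A-\lambda)x}<\norm{\epsilon x}=\epsilon\norm{x}$. For the bound on $x^{\perp}$ I would drop the $\parallel$-term in the Pythagorean identity above and use $|a-\lambda|\ge 2\epsilon$ on the range of $\Pi^{\perp}$ to get $\norm{(A-\lambda)x}^2\ge\norm{(A-\lambda)x^{\perp}}^2\ge 4\epsilon^2\norm{x^{\perp}}^2$; combined with the hypothesis this yields $\norm{x^{\perp}}<\tfrac12\norm{x}$, i.e.\ $\norm{x^{\perp}}<\tfrac12$ once $x$ is taken to be a unit vector.

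For the finite-size statement, normalize $\norm{x}=1$, so that $\norm{x^{\parallel}}^2=1-\norm{x^{\perp}}^2>\tfrac34$. I would then pick an orthonormal eigenbasis $\{x_a\}$ of the range of $\Pi^{\parallel}$ — at most $N$ vectors, since the ambient space has dimension $N$ — and write $\tfrac34<\norm{\Pi^{\parallel}x}^2=\sum_a|(x,x_a)|^2$. By the pigeonhole principle some $x_a$ satisfies $|(x,x_a)|^2>\tfrac{3}{4N}$, and since $x^{\perp}\perp x_a$ this $x_a$ is an eigenvector of $A$ with eigenvalue in $(\lambda-2\epsilon,\lambda+2\epsilon)$ with $|(x,x_a)|>\sqrt{3/(4N)}$, the asserted bound (the ``$y$'' in the statement being this $x_a$).

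There is no real obstacle here — the proof is a few lines of spectral-theorem bookkeeping. The only points requiring a little care are: keeping the factor-of-two margin between the $\epsilon$ in the hypothesis and the $2\epsilon$ window in the conclusions so that every inequality has slack; recording the normalization $\norm{x}=1$ explicitly (without it the clean statements read $\norm{x^{\perp}}<\tfrac12\norm{x}$ and $|(x,x_a)|>\sqrt{3/(4N)}\,\norm{x}$); and noting that $A$ does commute with the spectral projectors $\Pi^{\parallel},\Pi^{\perp}$, which is what makes the orthogonal splitting of $(A-\lambda)x$ legitimate.
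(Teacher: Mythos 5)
Your proof is correct and is essentially the same argument as the paper's: both diagonalize $A-\lambda$ (you via spectral projectors, the paper in an eigenbasis), split $\norm{(A-\lambda)x}^2$ according to whether $|a-\lambda|\ge 2\epsilon$, deduce $\norm{x^\perp}^2<\tfrac14$, and apply pigeonhole over at most $N$ basis vectors for the finite-size claim. No substantive difference.
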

\begin{proof}
Denote $B=A-\lambda$, then $\norm{Bx}<\norm{\epsilon x}$. Without loss of generality, assume $\norm{x}=1$. Let us diagonalize $B$, so that $B=\text{diag}\{b_1,\cdots,b_n\}$. Then we have
\begin{dmath}
	\epsilon^2>\sum{b_i^2|x_i|^2}=\sum_{|b_i|\geq 2\epsilon}b_i^2|x_i|^2+\sum_{|b_i|<2\epsilon}b_i^2|x_i|^2\\
	>4\epsilon^2\sum_{|b_i|\geq 2\epsilon}|x_i|^2+\sum_{|b_i|<2\epsilon}b_i^2|x_i|^2.
\end{dmath}
So we must have $\sum_{|b_i|\geq 2\epsilon}|x_i|^2<\frac{1}{4}$, thus $\norm{x^\perp}<\frac{1}{2}$  and $\exists i$ such that $|b_i|<2\epsilon$. If $N$ is finite, then from $\sum_{|b_i|< 2\epsilon}|x_i|^2<\frac{3}{4}$ we know $\exists i$ such that $|b_i|<2\epsilon, |x_i|>\sqrt{\frac{3}{4N}}$.
\end{proof}

Go back to the original proposition. We want to find a correspondence between vertex eigenvalues of $S$ and those of $Q_1$ and $Q_2$. To make the notation simple, in the following $e^{-r}$ means $C_1p(r)e^{-C_2}r$ and $C_1, C_2$ can change.

For each vertex state $x$ of $Q_1$: $Q_1x=q x$, $\norm{x}=1$ $(q\neq 0,1)$, separate $x$ as $x=y+z$ with respect to the disk $B(1,r)$. As in Eq.(\ref{eq-a1}), we still have $\norm{z}<\frac{e^{-r}}{|q^2-q|}$. It is not difficult to show that
\begin{dmath}
\norm{(S-q)x}=\norm{(S-Q_1)x}\leq \norm{(S-Q_1)y}+\norm{(S-Q_1)z} \\
\lesssim (1+\frac{1}{|q^2-q|})e^{-r}\defeq \delta.
\end{dmath}
According to Lemma \ref{lemma-approx}, $S$ has an eigenvalue in $(q-2\delta,q+2\delta)$. The same argument also applies to $Q_2$. Thus, for each vertex eigenvalue $q$ of $Q_1,Q_2$, we get an eigenvalue of $S$ in a neighbourhood of $q$.

Denote $U_\epsilon\defeq(-\epsilon,\epsilon)\cup(1-\epsilon,1+\epsilon)$. For $q\notin U_\epsilon$, $|q^2-q|>\epsilon/2$, so that $\delta<\frac{e^{-r}}{\epsilon}$. As $r\to\infty$, we will adjust $\epsilon$ accordingly so that $\delta$ is still small enough such that spectral gaps outside $U_\epsilon$ are always greater than $\delta$. Then we can describe the spectra structure of $Q_1, Q_2$ in Fig.~\ref{pic-spectrum}. The shaded windows have width$\sim\delta$ and are of three types. For types 1 and 2, we already get the correspondence. For type 3, we claim the dimension of the subspace $X$ corresponding to eigenvalues (of $S$) in such window is at least 4. Indeed, denote $x_1,x_2$ the eigenstates of $Q_1$, $x_3,x_4$ the the eigenstates of $Q_2$, then $|(x_i,x_j)|<\delta$. Let $x_i=u_i+v_i$ where $u_i\in X$ and $v_i\perp X$, then 
\begin{equation}\label{eq-121}
|(y_i,y_j)|\leq|(x_i,x_j)|+|(z_i,z_j)|<\delta+\frac{1}{4}.
\end{equation}
Similar to \refeq{eq-combination} (here $N=4$), we get $d\geq 4$. 
\begin{figure}
	\centering
	\includegraphics[width=\columnwidth]{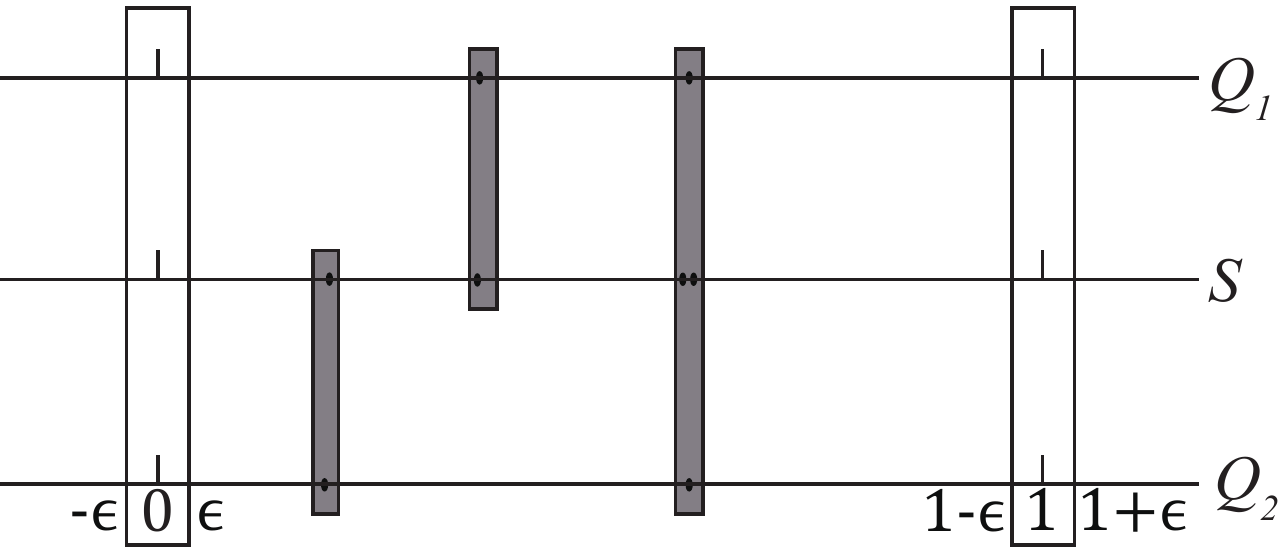}\\
	\caption{Spectra structure of $Q_i$ and $S$. We only consider eigenvalues outside $(-\epsilon,\epsilon)\cup(1-\epsilon,1+\epsilon)$. The dots $\bullet$ are eigenvalues. Each dots represents a Kramers pair due to time reversal symmetry. The shaded windows are of width $\delta\sim\frac{e^{-r}}{\epsilon}$ and are (from left to right) of three types.}\label{pic-spectrum}
\end{figure}

Moreover, each eigenstate of $S$ with $s\in U_\epsilon$ is generated in this way. Indeed, assume $Sx=sx$, $\norm{x}=1$ $(s\neq 0,1)$, then 
\begin{equation}
x=\frac{1}{s^2-s}(V_1x+V_2x)
\end{equation}
is a decomposition of $x$. At least one of $\norm{V_ix}$ should be larger than $|s^2-s|/2$, say $V_1x$. Note that $V_ix$ and $(S-s)V_ix$ are mainly supported near vertex $i$, and
\begin{equation}
(S-s)V_1x+(S-s)V_2x=(S-s)(S^2-S)x=0,
\end{equation}
and both terms must be small:
\begin{equation}
\norm{(S-s)V_ix}<e^{-r}.
\end{equation}
Therefore,
\begin{dmath}
	\norm{(Q_1-s)V_1x}\leq\norm{(S-s)V_ix}+\norm{(S-Q_1)V_1x}\\
	<e^{-r}\leq\frac{e^{-r}}{|s^2-s|}\norm{V_1x}\defeq\delta'||V_1x||.
\end{dmath}
According to Lemma \ref{lemma-approx}, $Q_1$ has an eigenvalue in $(s-2\delta',s+2\delta')$. Therefore, $s$ must be near (within $\sim\delta$) a window, and an argument similar to \eqref{eq-121} shows that $x$ cannot be a new eigenstate.

Due to this correspondence, the last term in the decomposition
\begin{dmath}
|\trv(S)-(\trv(Q_1)+\trv(Q_2))|\\
\leq|\sum_{s\in U_{\epsilon}}s|+|\sum_{q_1\in U_\epsilon}q_1|+|\sum_{q_2\in U_\epsilon}q_2|+|\sum_{s\notin U_{\epsilon}}s-\sum_{q_1\notin U_\epsilon} q_1-\sum_{q_2\notin U_\epsilon} q_2|,
\end{dmath}
is bounded by $\delta\ln^2\frac{1}{\epsilon}$ and goes to 0 as $r\to \infty$. The second and third term can be bounded due to \refeq{eq-EDPistraceclass}, since $Q_i-\overline{Q_i}$ obeys EDP with the same $C_1$ and $C_2$. For the first term, $S^2-S$ also obeys EDP (with respect to the central point 0), however with a new constant $C'_1\sim C_1e^{r}$. This is because (see \refeq{eq-S2-S}) the EDP of $V_i$ is with respect to vertex $i$, so the decay property of $S^2-S$ with respect to vertex 0 need to be estimated by $e^{-\dist{\vx,0}}<e^{r}e^{-\dist{\vx,1}}$. Luckily, similar to \refeq{eq-EDPistraceclass}, we still have

\begin{equation}\label{eq-later}
|\sum_{s\in U_{\epsilon}}s|<\int_{0}^{\epsilon} \frac{1}{C_2^2}\ln^2{\frac{ C_1e^{r}}{C_2^\alpha x}}dx\to 0,
\end{equation}
as $r\to\infty$ as long as $\epsilon=o(\frac{1}{r^2})$.
Thus we have proved that
\begin{equation}
\lim_{r\to\infty}\trv(S)-(\trv(Q_1)+\trv(Q_2))=0.
\end{equation}
Note that we have assumed that the requirement $\epsilon=o(\frac{1}{r^2})$ is compatible with $\delta=\frac{e^{-r}}{\epsilon}<\text{gap}$. This technical assumption is reasonable. Indeed, according to Lemma \ref{lemma-est}, the spectral gap at $\epsilon$ is roughly $(\frac{d}{d\epsilon}\ln^2\frac{1}{\epsilon})^{-1}\sim\frac{\epsilon}{\ln\epsilon}$ in average. In order for $\delta<\frac{\epsilon}{\ln\epsilon}$, it is enough to set $\epsilon>\Omega(e^{-C'r})$, which is exponentially smaller than $\frac{1}{r^2}$ for large $r$. Even if we consider the fluctuation of the spectral gaps and even if the level statistics is Poissonian (so that no level repulsion), the probability for this to be true is 1 from the following estimation:
\begin{dmath}
	\Pr(\text{at least one gap}<\frac{p(r)e^{-r}}{\epsilon})<\sum_{x>\epsilon}\frac{p(r)e^{-r}/\epsilon}{x/\ln x}<\frac{p(r)e^{-r}\ln\epsilon}{\epsilon^2} \times \ln^2\frac{1}{\epsilon}\to 0
\end{dmath}

\subsection{Proof of the finite size approximation}\label{sec-finiteproof}
In this section, we prove Proposition \ref{prop-finitesize}. The technique used will be similar to the above section. We need to compare vertex eigenvalues of $Q$ and $Q'_N$. Recall that $\norm{Q'_N-Q_N}\leq\rho^2,\norm{V'_N-W_N}<\rho,\norm{V'_N-W_N}<\rho$ where $\rho\sim e^{-Cr}$.

Temporarily fix $\epsilon$, and only consider eigenvalues outside $U_\epsilon\defeq(-\epsilon,\epsilon)\cup(1-\epsilon,1+\epsilon)$. For any $\alpha\notin U$, $|\alpha^2-\alpha|>\epsilon/2$. 

For $(Q-q)x=0$, $q\notin U_\epsilon$, we separate it as $x=y+z$ with respect to circle $r/2$, again $\norm{z}<\frac{p(r)e^{-Cr}}{\epsilon}\defeq\delta$. In the following $\delta$ means ``everything that goes like $\frac{p(r)e^{-Cr}}{\epsilon}$ with perhaps different $C$". Thus
\begin{equation}
\begin{aligned}
&\norm{(Q'_N-q)x}=\norm{(Q'_N-Q_N+Q_N-Q)x}\\
\leq&\norm{(Q'_N-Q_N)x}+\norm{(Q_N-Q)y}+\norm{(Q_N-Q)z}\lesssim \delta,
\end{aligned}
\end{equation}
so $Q'_N$ has an eigenvalue $q'\in(q-2\delta,q+2\delta)$ with eigenstate $x'$ satisfies $|(x,x')|>\sqrt{\frac{3}{4N}}$ (Lemma \ref{lemma-approx}). This implies $x'$ must contain a vertex eigenstate. Indeed, if not, we have $V'_Nx'=0$ so $x'=\frac{1}{{q'}^{2}-q'}W'_Nx'$ which is concentrated near boundary $r$, thus
\begin{dmath}
|(x,x')|=\frac{1}{|{q'}^2-q'|}(x,W'_Nx')\\
<\frac{2}{\epsilon}[|(y,W'_Nx')|+|(z,W'_Nx')|]\lesssim\delta,
\end{dmath}
a contradiction as $r\to\infty$.

On the other hand, if $(Q'_N-a)x=0 (a\neq 0,1)$, and $x$ is a vertex state: $W'_Nx=0$, then $x=\frac{V'_Nx}{a^2-a}$. We have
\begin{dmath}
(Q-a)x=\frac{1}{a^2-a}(Q-Q'_N)V'_Nx\\
=\frac{1}{a^2-a}[(Q-Q_N)V_Nx+(Q_N-Q'_N)V'_Nx]\lesssim\delta.
\end{dmath}
So, $Q$ has an eigenvalue in $(a-2\delta,a+2\delta)$.

Now we choose $r$ according to the same technical assumption above, so that there is a correspondence outside region $U_\epsilon$ for $\forall\epsilon$. Then, similarly we have
\begin{equation}
|\sum_{\text{vertex}}q'-\trv(Q)|\leq|\sum_{q\in U_{\epsilon}}q|+|\sum_{q'\in U_\epsilon}q'|+|\sum_{q\notin U_{\epsilon}}q-\sum_{q'\notin U_\epsilon}q'|.
\end{equation}
The last term is bounded by $\delta\ln^2\frac{1}{\epsilon}$ which goes to 0 as $\epsilon\to 0$. The first term also goes to 0 since $Q-\bQ$ is trace class. The second term is bounded by $\text{No.~}\{q'\}\epsilon$. Obviously $\text{No.~}\{q'\}<\dim Q_N\sim r^2$, so this term also converges to 0 since $\epsilon=o(\frac{1}{r^2})$.

\subsection{Proof that \texorpdfstring{$S-T$}{S-T} is trace class}\label{sec-sttraceclass}
In this section, we prove the claim used in Property \ref{prop-S0}.

The first step is to figure out the decay behaviour of the matrix elements of $S-T$.
According to the Peierls substitution \cite{Peierls},
\begin{equation}
S_{ij}=P_{ij}e^{i\int_i^j\vA\cdot d\vr},~~T_{ij}=P_{ij}e^{i\int_i^j\vA'\cdot d\vr},
\end{equation}
where $\vA$ and $\vA'$ are the vector potential for the two flux configurations. See Fig.~\ref{pic-geo}(a) (all angles here are directed), we have:
\begin{equation}
\begin{aligned}
\vA\cdot d\vr\propto 2\theta,~~\vA'\cdot d\vr\propto \theta_1+\theta_2,\\
(S-T)_{\vx,\vy}\sim P_{\vx,\vy}e^{i(\theta_1+\theta_2)}(e^{i(2\theta-\theta_1-\theta_2)}-1).
\end{aligned}
\end{equation}
\begin{figure}
 \centering
  \includegraphics[width=0.8\columnwidth]{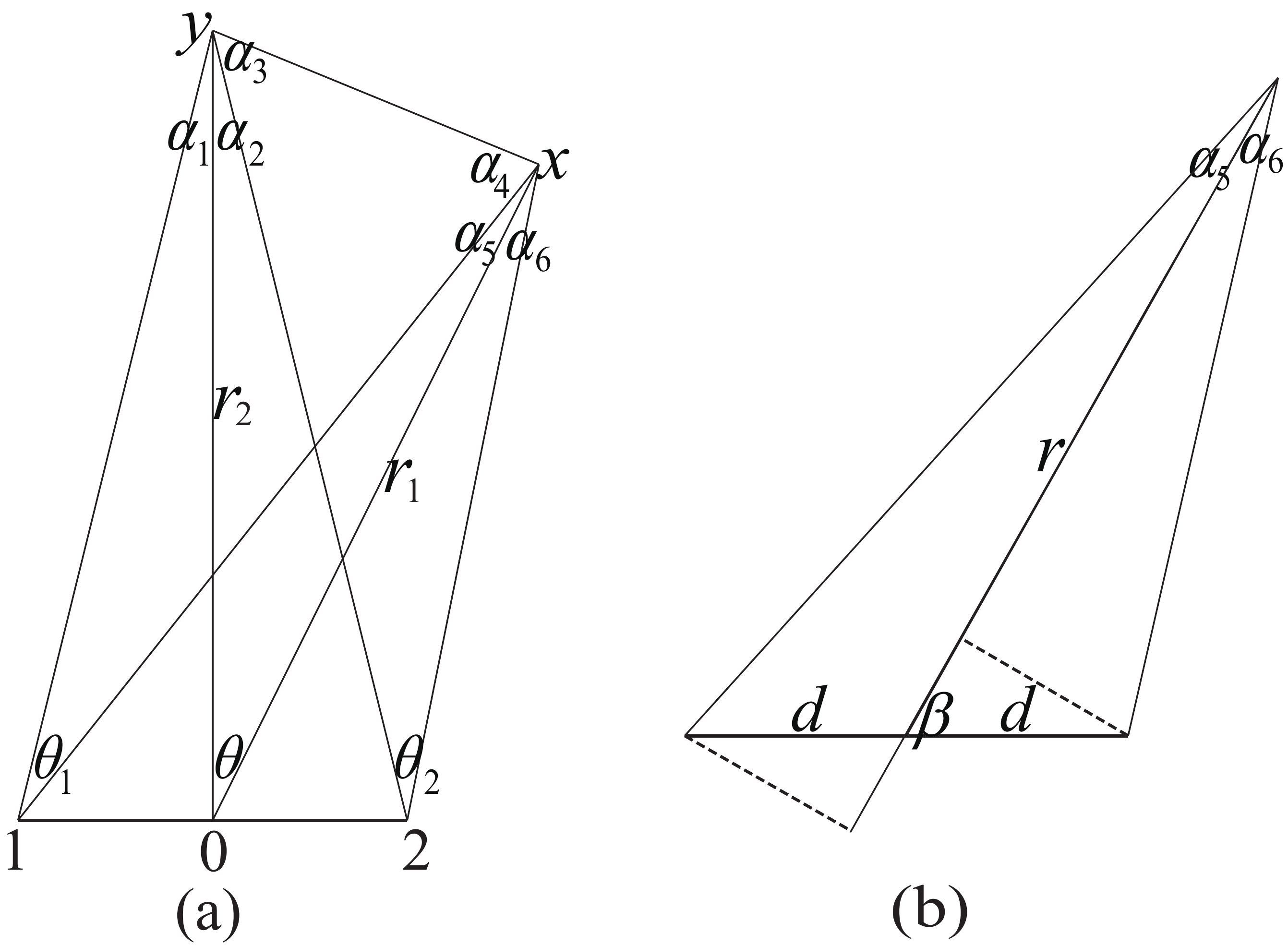}\\
 \caption{Relevant geometries in the proof. }\label{pic-geo}
\end{figure}

From geometry, $2\theta-\theta_1-\theta_2=(\alpha_1-\alpha_2)-(\alpha_5-\alpha_6)$. Let us calculate $(\alpha_5-\alpha_6)$. See Fig.~\ref{pic-geo}(b), we have:
\begin{equation} 
\begin{aligned}
\alpha_5-\alpha_6 & = \arctan\frac{d\sin\beta}{r+d\cos\beta}-\arctan\frac{d\sin\beta}{r-d\cos\beta} \\
 & = \arctan\frac{\frac{d\sin\beta}{r+d\cos\beta}-\frac{d\sin\beta}{r-d\cos\beta}}{1+\frac{d\sin\beta}{r+d\cos\beta}\frac{d\sin\beta}{r-d\cos\beta}}\\
&=-\arctan\frac{d^2\sin 2\beta}{r^2-d^2}\\
&=-\frac{d^2\sin 2\beta}{r^2}+\mathcal{O}(\frac{1}{r^4}).
\end{aligned}
\end{equation}
Due to the energy gap, $|A_{\vx,\vy}|\lesssim e^{-C|\vx-\vy|}$. Assuming $r_1\geq r_2$, we claim that
\begin{equation}
e^{-C|\vx-\vy|}|e^{i(2\theta-\theta_1-\theta_2)}-1|<e^{-\frac{C}{2}|\vx-\vy|}\mathcal{O}(\frac{1}{r_1^3}).
\end{equation} 
Indeed, if $e^{-\frac{C}{2}|\vx-\vy|}<\frac{1}{(r_1+r_2)^3}$, there is nothing needed to prove. If not, then $|\vx-\vy|<\frac{6}{C}\ln(r_1+r_2)<\frac{7}{C}\ln r_1$ (asymptotically). In this case, from geometry, we know $|\beta_i-\beta_j|=|\theta|\lesssim\frac{|\vx-\vy|}{r_1}$. Then $2\theta-\theta_1-\theta_2=(\alpha_1-\alpha_2)-(\alpha_5-\alpha_6)=\frac{\sin2\beta_1}{r_1^2}-\frac{\sin2\beta_2}{r_2^2}+\mathcal{O}(\frac{1}{r_2^4})$ will be of order $\mathcal{O}(\frac{|\vx-\vy|}{r_1^3})$ as can be seen from Taylor expansion. Then it is easy to see that the claim also holds.

The result is (in a more symmetric fashion, ignore constants) as follows: the operator $A= S-T$ satisfies the following decay property:
\begin{equation}\label{eq-decay}
|A_{\vx,\vy}|<\frac{1}{(|\vx|+|\vy|)^3}e^{-|\vx-\vy|}.
\end{equation}

Now, we prove this kind of operator must be trace class. Let us denote the $n^{\text{th}}$ singular value (decreasing order) to be $s_n$. According to the Courant min-max principle \cite{Lax}, we have
\begin{equation}
s_n=\min_{Y_{n-1}}\max_{u\perp Y_{n-1}}\frac{(Au,Au)}{(u,u)},
\end{equation}
where $Y_{n-1}$ means a subspace of dimension $n-1$. Thus, for any given n-dimensional subspace $Y_{n-1}$, we have
\begin{equation}
s_n^2\leq\max_{u\perp Y_{n-1}}\frac{(Au,Au)}{(u,u)}=\max_{\substack{u\perp Y_{n-1} \\||u||=1 }}||Au||^2.
\end{equation}
Let us choose the subspace $Y_{n-1}$ to be spanned by the $n$ components nearest to the center (so that the label of the components are approximately in the disk of radius $r\sim\sqrt{n}$). Denote the columns of $A$ to be $v_\vx$ ($v_\vx=Ae_\vx$, $\vx\in\mathbb{Z}^2$ is the label). With \refeq{eq-decay} it is easy to show (note that here $e^{-|\vx-\vy|}$ means $e^{-C|\vx-\vy|}$ for a different $C$)
\begin{equation}\label{eq-vxvy}
|(v_\vx,v_\vy)|=|(A^2)_{\vx,\vy}|\lesssim\frac{e^{-|\vx-\vy|}}{|\vx|^3|\vy|^3}.
\end{equation}
Thus,
\begin{dmath}
\norm{Au}^2=\norm{\sum_{|\vx|>r}u_\vx v_\vx}^2=\sum_{|\vx|,|\vy|>r}{\bar{u}_\vx u_\vy (v_\vx,v_\vy)}\\
=(\sum_{\substack{|\vx-\vy|\geq l\\|\vx|,|\vy|>r}}+\sum_{\substack{|\vx-\vy|<l\\|\vx|,|\vy|>r}})\bar{u}_\vx u_\vy (v_\vx,v_\vy).
\end{dmath}
Here, $l$ will be of the order $\ln r$, to be specific later.

The first summation is (crude but enough) controlled by $e^{-l}$ due to \refeq{eq-vxvy} and Cauchy inequality. For the second summation, we have
\begin{equation}
|\sum \bar{u}_\vx u_\vy (v_\vx,v_\vy)|<\frac{1}{4}\sum (|u_\vx|^2+|u_\vy|^2)|(v_\vx,v_\vy)|\lesssim \frac{l^2}{r^6}.
\end{equation}
Let us choose $l$ such that $e^{-l}=\frac{1}{r^6}$, we finally have
\begin{equation}
s_n^2\leq e^{-l}+\frac{l^2}{r^6}<\frac{\ln^2r}{r^6}\sim\frac{\ln^2n}{n^3}.
\end{equation}
so $\sum s_n=\sum \frac{\ln n}{n^{3/2}}$ converges, which means $A$ is trace class.


\bibliography{TIz2invariant}

\end{document}